\documentclass[a4paper,onecolumn,unpublished]{quantumarticle}
\pdfoutput=1
\usepackage{todonotes}
\usepackage{graphicx}
\usepackage{physics}
\usepackage{tcolorbox}
\usepackage{amsthm}

\usepackage{amssymb}
\usepackage{cancel}
\usepackage{supertabular}
\usepackage{hyperref}
\usepackage{cleveref}
\usepackage{algorithm}
\usepackage{algpseudocode}
\algrenewcommand{\algorithmicensure}{\textbf{Output:}}

\usepackage{tikz}
\usetikzlibrary{arrows.meta,decorations.pathmorphing}
\usepackage[utf8]{inputenc}
\DeclareUnicodeCharacter{2217}{*}

\newtheorem{definition}{Definition}
\newtheorem{theorem}{Theorem}
\newtheorem{lemma}{Lemma}

\newtheorem{remark}{Remark}
\newtheorem{proposition}{Proposition}
\newtheorem{note}{Note}

\begin{document}

\title{Quantum ring all-reduce: communication and privacy advantages for distributed learning}

\author{María Gragera Garcés}
\email{m.gragera.garces@ed.ac.uk}
\affiliation{University of Edinburgh, Edinburgh, United Kingdom}
\orcid{0009-0000-9018-7445}

\author{Lirandë Pira}
\email{lpira@nus.edu.sg}
\affiliation{Centre for Quantum Technologies, National University of Singapore, Singapore}
\orcid{0000-0001-7393-2640}

\maketitle

\begin{abstract}
Machine learning models have scaled to unprecedented sizes, making training across distributed devices the de facto standard in the field. 
In this work, we explore how quantum communications can make distributed training both more communication-efficient and information-theoretically private, for both classical and quantum learning models.
Ring all-reduce is the foundational communication primitive for large-scale distributed training. We present a quantum version that reduces per-link \emph{online} communication by a provably optimal factor of two (a saving on the synchronisation critical path; total lifetime channel use is unchanged) using pre-shared entanglement and superdense coding, without requiring the learning model or gradient computation to change. Beyond bandwidth, the primitive enables privacy guarantees that are information-theoretically impossible for any classical protocol, achieving composable $\varepsilon$-secure aggregation, via verified entanglement, at a $2\times$ overhead in GHZ copies. 
Our hybrid quantum-classical communication architecture yields simultaneous communication and security advantages for large scale distributed training, regardless of whether the learning itself is quantum or classical. Finally, we characterise quantum advantages in gradient conflict detection for server-to-client communication under bandwidth constraints, a setting that arises after ring all-reduce is completed, when full gradient broadcast to external clients is infeasible due to bandwidth constraints. Two
variants of the problem admit qualitatively different separations. For margin-based alignment testing (\textsc{GapIP}$_\tau$), the quantum advantage is quadratic in the margin parameter: $\widetilde{O}(\tau^{-1}\log P)$ qubits
versus $\widetilde{O}(\min(\tau^{-2},P))$ bits classically. For sign-consistency auditing against a private parameter matching (\textsc{TieAudit}$_\varepsilon$), the advantage represents an exponential separation in communication complexity: classical one-way communication requires $\Omega(\sqrt{P})$ bits whereas $O(\varepsilon^{-2}\log P)$ qubits suffice. 
\end{abstract}

\newpage
\tableofcontents
\newpage

\section{Introduction}

Training a competitive machine learning model today requires processing datasets of billions of samples, optimising hundreds of billions of parameters, and executing arithmetic operations that would take enormous resources on a single processor. The gap between what a single device can deliver and what training demands has grown faster than hardware alone can close. As model sizes, datasets, and computational demands continue to grow, training is performed across clusters of devices that must repeatedly communicate in order to perform as a single learning system~\cite{bennun2020demystifying}. This suggests that industry-scale machine learning is by default distributed, and thus, modern distributed training is dominated by gradient communication.

Ring all-reduce is a standard communication primitive for distributed training at scale~\cite{patarasuk2009bandwidth}. This algorithm works by arranging $N$ workers in a ring, each holding a local gradient $g_j\in\mathbb{R}^P$ computed on its local data; in each round every worker exchanges one message with each ring neighbour, and after $2(N-1)$ rounds every worker holds the global mean $\bar{g}=\frac{1}{N}\sum_j g_j$ to $b$ bits of precision per coordinate ($Pb$ bits total). The primitive underpins large-scale machine learning, and more specifically deep learning systems across industry, implemented in frameworks such as Horovod~\cite{sergeev2018horovod}, PyTorch\,DDP~\cite{li2020pytorch}, and NCCL (NVIDIA Collective Communications Library).

In this work we explore how quantum technologies can reduce the resources necessary to perform this primitive. One principle organises what follows: the size of the quantum advantage is governed by \emph{what one asks of the all-reduce aggregate}. Transporting its value buys only a Holevo-bounded constant factor; certifying how it was produced buys a qualitative privacy separation, classically impossible at any communication
cost; and querying a privately chosen relation over it can buy an exponential separation. These regimes coexist within a single GHZ-based protocol, leaving the learning model untouched. The first of them, the bandwidth saving, is a baseline rather than the headline. Bandwidth saving relocates online communication off the critical path but does not reduce total channel use (made more precise in \Cref{sec:protocol}), and it is the substrate on which the privacy and conflict-detection results build. The main contributions of this work are threefold.

First, we propose a superdense-coding strategy that exploits pre-shared EPR pairs to replace each classical message of $m$ bits with $\lceil m/2\rceil$ qubits, reducing the online communication across every ring link by exactly a factor of two. This is provably optimal: a cut argument shows at least $Pb - O(\log N)$ classical bits of information must cross every cut in the online phase of the ring, and the entanglement-assisted capacity of a qubit channel is at most two classical bits even with unlimited entanglement, so the factor-of-two is the ceiling of what entanglement can buy. The round structure and latency are unchanged: superdense coding reduces the online communication \emph{volume} rather than the number of rounds, so the quantum protocol blocks for the same $2(N-1)$ post-barrier rounds as the classical schedule. No protocol confined to the ring---classical or quantum, with any pre-shared entanglement---can complete an all-reduce in fewer than the $\lfloor N/2\rfloor$ rounds set by the ring's diameter.

Second, we present the security guarantees a quantum ring all-reduce offers. In a model where every channel (including the setup phase) is wiretapped, classical protocols cannot achieve information-theoretically private aggregation at \emph{any} communication cost, while our Verified GHZ Aggregation (VGA) protocol achieves it at a
$2\times$ overhead in GHZ copies, regardless of whether the learning itself is quantum or classical.

Third, we study two variants of gradient conflict detection for communication-limited settings where broadcasting full gradient vectors is infeasible. In distributed training, the server often broadcasts a compressed global signal that each client must interpret against its own private model structure: the compressed message alone does not tell the client whether the proposed update is compatible with its internal constraints. In the first variant (\textsc{GapIP}), two workers check peer-to-peer whether their unit-normalised gradients are aligned or conflicting up to a margin $\tau$, without exchanging full vectors: the quantum communication cost is $\widetilde{O}(\tau^{-1}\log P)$ versus $\widetilde{O}(\min(\tau^{-2},P))$ classically, a tight quadratic separation. The second variant, \textsc{TieAudit}, arises in the signSGD setting where the server's compressed global signal is the sign-gradient $s\in\{\pm1\}^P$, one bit per parameter. Each client holds a private matching $M_k$ of parameter-index pairs that are structurally
coupled in its local model (via weight sharing, factorisation, or quantisation); the matching acts as a private key for checking whether the global update direction is internally consistent.
Concretely, the client wants to estimate what fraction of its tied pairs $(i,j)\in M_k$ receive conflicting global signs $s_i\neq s_j$: a high fraction signals that the server's update is structurally incompatible with the client's model and should be locally corrected.
The quantum advantage here is exponential: classical one-way communication requires $\Omega(\sqrt{P})$ bits regardless of approximation accuracy, while $O(\varepsilon^{-2}\log P)$ qubits suffice.

The quantum-communication results we draw on here --- such as superdense coding, GHZ-based metrology, the impossibility of secret-key agreement from independent randomness, and Boolean Hidden Matching --- are each individually standard. Our aim is not to promote any particular approach, but to identify the communication layer of distributed learning, a setting of clear practical importance, as a regime in which these results yield concrete advantages. These advantages arise from the communication layer alone: unlike Gilboa et al.~\cite{gilboa2024exponential}, whose exponential separation stems from the nodes holding quantum data, our inputs and outputs can be entirely classical, as the magnitude of our advantage claims is governed by the type of question asked of the ring all-reduce aggregate. Moving the aggregate's value buys a Holevo-bounded constant factor; certifying the setup that produced it buys a qualitative privacy separation, impossible to achieve classically at any communication cost; and extracting a privately chosen relation over it can buy an exponential separation. A single GHZ-based communication protocol spans all three regimes while leaving the learning model untouched: this is the central message of our work.

This paper is organised as follows. 
\Cref{sec:related} presents related work including the ring all-reduce architecture and existing literature regarding quantum advantage in distributed machine learning.
\Cref{sec:protocol} introduces the quantum ring all-reduce primitive in both its raw and GHZ phase encoding forms.
\Cref{sec:security} demonstrates the security advantages of a GHZ phase encoding ring all-reduce architecture protocol.
Finally, \Cref{sec:tieaudit} presents the gradient conflict settings that might arise from ring all-reduce and how quantum architectures can enable quadratic and exponential advantages.
Conclusions and future outlooks can be found in \Cref{sec:conclusion}.

\section{Related work}\label{sec:related}

\subsection{Gradient compression and ring all-reduce}
Ring all-reduce is the bandwidth-optimal topology for gradient aggregation in data-parallel training~\cite{patarasuk2009bandwidth}, and its implementation in Horovod~\cite{sergeev2018horovod} and PyTorch\,DDP~\cite{li2020pytorch} makes it the de facto standard for large-scale deep learning. Gradient compression methods, including signSGD~\cite{bernstein2018signsgd} and its fault-tolerant majority-vote variant~\cite{bernstein2019signsgd}, reduce per-link bandwidth
at the cost of introducing sign-aggregation artifacts; \textsc{TieAudit} formalises one
such artifact as a detectable relational statistic.
The gradient conflict literature~\cite{yu2020gradient,karimireddy2020scaffold} motivates
\textsc{GapIP}: both PCGrad and SCAFFOLD identify conflicting updates as a source of convergence degradation, but neither addresses the communication cost of detecting conflicts under bandwidth constraints.

\subsection{Quantum advantages in distributed machine learning}
Gilboa et al.~\cite{gilboa2024exponential} show an exponential quantum communication advantage for distributed inference when nodes hold \emph{quantum} data; the advantage follows from the exponential classical simulation cost of quantum states rather than from
the communication primitive itself. Our work is complementary: inputs are classical gradient vectors and the quantum advantage
arises entirely from the communication layer (superdense coding for bandwidth, phase encoding aggregation for security, and one-way quantum messages for gradient conflict detection). Doosti et al.~\cite{doosti2026distributed} study distributed quantum state certification under one-way communication constraints and obtain a quadratic quantum-over-classical advantage via quantum random compression. Their setting shares structural parallels with our \textsc{GapIP} and \textsc{TieAudit} problems, both exploit one-way quantum advantages in relational statistics, but addresses state certification to a central node rather than gradient aggregation for learning and conflict detection in a ring. Pira and Ferrie~\cite{pira2023invitation} survey distributed quantum neural networks, focusing on architectures rather than communication primitives, and identify quantum generalisations of collective communication protocols as an open challenge.

\subsection{Quantum secure aggregation}
Classical private aggregation relies on pairwise masking over trusted side channels; quantum proposals extend this via gradient
hiding~\cite{li2024privacy}, blind quantum computation~\cite{li2024blind,zhang2022federated},
and Byzantine-robust quantum aggregation~\cite{nath2026cqsa}.
A common assumption across this line is that entanglement or keys are distributed over a channel not visible to the adversary.
\Cref{thm:cimposs} establishes that the trusted-channel assumption in classical schemes~\cite{bonawitz2017} is not merely a convenience but a necessity: no classical protocol achieves information-theoretic privacy when every channel is observed. Quantumly the barrier is lifted: \Cref{prop:vga-robust} gives per-coordinate $\delta$-security conditional on verification, and composing this additively across coordinates and rounds (\Cref{lem:chain}) yields, for any $\varepsilon>0$, composable $\varepsilon$-secure aggregation at sufficiently many verified GHZ copies, at a $2\times$ overhead in GHZ copies for verification and with no trusted-channel assumption.

\subsection{Communication complexity}
The lower bounds in \Cref{sec:tieaudit} draw on two results from one-way communication complexity. The Gap-Hamming Distance lower bound of Chakrabarti and Regev~\cite{chakrabartiregev} gives
the tight $\widetilde\Omega(\tau^{-2})$ classical lower bound for \textsc{GapIP}; the quantum lower bound uses Razborov's symmetric-predicate technique~\cite{razborov}. The exponential separation for \textsc{TieAudit} reduces to the Boolean Hidden Matching lower bound of Gavinsky et al.~\cite{gavinsky2008exponential}, which established the first exponential separation between quantum and classical one-way communication complexity; Bar-Yossef, Jayram, and Kerenidis~\cite{baryossef2004exponential} give an alternative
construction. Our results instantiate these separations in concrete gradient statistics rather than abstractly defined communication problems.

\section{The quantum ring all-reduce primitive}\label{sec:protocol}

The goal of the quantum ring all-reduce is to compute $\bar{g}=\frac{1}{N}\sum_j g_j$ with minimum qubit communication while preserving the round structure of the classical primitive. The protocol has two phases: a \emph{setup} phase in which entanglement is distributed before any gradient exists, and an \emph{online} phase in which each training step runs the all-reduce using that entanglement.

\subsection{Setup: pre-sharing entanglement}

Before training begins, every pair of adjacent nodes $(j,j+1)$ pre-shares $\lceil Pb/2\rceil$ EPR pairs $\ket{\Phi^+}=\tfrac{1}{\sqrt{2}}(\ket{00}+\ket{11})$, one pair per two classical bits of gradient that will cross that link. This resource is \emph{input-independent}: it can be generated, certified, and distributed before any training data or model gradient is seen, which is both a practical convenience and the key property the security analysis exploits (see \Cref{sec:security}).

\subsection{Online: superdense-coding all-reduce}

The bandwidth advantage of replacing classical messages with qubits is fundamentally capped at $2\times$ by the entanglement-assisted Holevo capacity; the protocol below meets this bound exactly.

The classical ring all-reduce proceeds in two sub-phases: a \emph{scatter-reduce}, in which each node accumulates one $P/N$-dimensional partial sum by passing it clockwise around the ring, followed by an \emph{all-gather}, in which the completed mean is
disseminated counter-clockwise. Each sub-phase requires $N-1$ rounds, and $Pb$ classical bits cross every cut of the ring per sub-phase ($2Pb$ in total).

A straightforward quantum version would replace each classical message of $m$ bits with $\lceil m/2\rceil$ qubits. Node $j$ encodes two classical bits into one qubit by applying one of the four Pauli operations $\{I,X,iY,Z\}$ to its half of a pre-shared EPR pair;
the receiving node recovers both bits via a Bell measurement (superdense coding~\cite{bw92}). The round structure, the all-reduce schedule, and the gradient recovery arithmetic are identical to the classical case; only the physical carriers change. The online qubit cost across every link is therefore $\frac{Pb}{2}(1+o(1))$, a factor of two reduction.

\begin{remark}[Resource accounting across rounds]
Superdense coding consumes entanglement: the Bell measurement that decodes each message destroys the EPR pair it uses, so the pairs are not reusable and a fresh supply must be distributed for every training round.  Because entanglement cannot be created across a cut by local operations and classical communication, distributing one EPR pair across a link costs one qubit channel-use across that link, and each online qubit consumes exactly one such pair.  Counting both phases, the total number of carriers crossing every link therefore equals the classical baseline: consistent with
the Holevo bound, superdense coding does not reduce the \emph{total} communication, it halves only the \emph{online} portion.  The factor of two is thus a relocation of communication off the critical path rather than a net reduction in lifetime channel use, and this is the operationally relevant quantity.  The setup phase is input-independent and can be scheduled during computation or on a dedicated entanglement-distribution layer, whereas the online phase runs inside the all-reduce barrier that blocks all $N$ workers.  In the communication-bound regime typical of large-scale distributed training, where barrier bandwidth on the critical path is the scarce resource, halving the online volume is the saving that matters even though total channel use is unchanged; an ideal setup for our architecture.
\end{remark}

The factor is tight: a cut separating any node set $A$ from $B$ must convey the Shannon entropy of $\sum_{j\in B}g_j$ equals $Pb-O(\log N)$ bits (the sum of uniform $b$-bit words loses only carry overhead), and pre-shared entanglement is input-independent so by the
no-communication theorem it carries none of that entropy on its own. Each qubit crossing the cut conveys at most $C_E=2$ classical bits even with unlimited entanglement, the entanglement-assisted Holevo bound~\cite{bsst}. Hence at least $\tfrac{Pb}{2}-O(\log N)$ qubits must cross every cut in the online phase: the protocol meets the lower bound.

\begin{figure}[htbp]
\centering
\resizebox{\linewidth}{!}{%
\begin{tikzpicture}[
  cel/.style={draw=black!35, minimum width=1.9cm, minimum height=0.42cm,
              font=\scriptsize, align=center, inner sep=1pt},
  wlb/.style={cel, fill=gray!12, font=\scriptsize\bfseries},
  hdr/.style={minimum width=1.9cm, minimum height=0.38cm,
              font=\scriptsize\bfseries, text=black!55, align=center},
  sti/.style={font=\small\bfseries, align=center},
  sar/.style={-{Stealth[scale=0.9]}, thick, black!45},
]
\colorlet{Ca}{red!28}
\colorlet{Cb}{green!40}
\colorlet{Cc}{blue!22}
\colorlet{Cps}{orange!32}
\colorlet{Crr}{teal!30}

\node[sti] at (3.8, 0.45) {Step~0: Initial State};
\node[hdr] at (0.95, 0.00) {};
\node[hdr] at (2.85, 0.00) {Chunk 1};
\node[hdr] at (4.75, 0.00) {Chunk 2};
\node[hdr] at (6.65, 0.00) {Chunk 3};
\node[wlb] at (0.95,-0.42) {Worker A};
  \node[cel,fill=Ca] at (2.85,-0.42) {$A_1$};
  \node[cel,fill=Cb] at (4.75,-0.42) {$A_2$};
  \node[cel,fill=Cc] at (6.65,-0.42) {$A_3$};
\node[wlb] at (0.95,-0.84) {Worker B};
  \node[cel,fill=Ca] at (2.85,-0.84) {$B_1$};
  \node[cel,fill=Cb] at (4.75,-0.84) {$B_2$};
  \node[cel,fill=Cc] at (6.65,-0.84) {$B_3$};
\node[wlb] at (0.95,-1.26) {Worker C};
  \node[cel,fill=Ca] at (2.85,-1.26) {$C_1$};
  \node[cel,fill=Cb] at (4.75,-1.26) {$C_2$};
  \node[cel,fill=Cc] at (6.65,-1.26) {$C_3$};
\node[draw=purple!45, fill=purple!6, rounded corners=2pt, font=\scriptsize,
      text=purple!80!black, align=center, inner sep=5pt]
  at (10.5,-0.63) {Setup: pre-share $\ket{\Phi^+}$\\on each ring edge\\(0 training-time bits)};

\draw[sar] (3.8,-1.65) -- node[right,font=\scriptsize,text=black!48]{round 1} (3.8,-2.10);

\node[sti] at (3.8,-2.50) {Scatter-Reduce Round~1:
  \quad{\small(A$\!\to\!$B: $A_1$,\; B$\!\to\!$C: $B_2$,\; C$\!\to\!$A: $C_3$)}};
\node[hdr] at (0.95,-2.95) {};
\node[hdr] at (2.85,-2.95) {Chunk 1};
\node[hdr] at (4.75,-2.95) {Chunk 2};
\node[hdr] at (6.65,-2.95) {Chunk 3};
\node[wlb] at (0.95,-3.37) {Worker A};
  \node[cel,fill=Ca]  at (2.85,-3.37) {$A_1$};
  \node[cel,fill=Cb]  at (4.75,-3.37) {$A_2$};
  \node[cel,fill=Cps] at (6.65,-3.37) {$A_3{+}C_3$};
\node[wlb] at (0.95,-3.79) {Worker B};
  \node[cel,fill=Cps] at (2.85,-3.79) {$A_1{+}B_1$};
  \node[cel,fill=Cb]  at (4.75,-3.79) {$B_2$};
  \node[cel,fill=Cc]  at (6.65,-3.79) {$B_3$};
\node[wlb] at (0.95,-4.21) {Worker C};
  \node[cel,fill=Ca]  at (2.85,-4.21) {$C_1$};
  \node[cel,fill=Cps] at (4.75,-4.21) {$B_2{+}C_2$};
  \node[cel,fill=Cc]  at (6.65,-4.21) {$C_3$};
\node[font=\scriptsize\bfseries,text=black!70] at (11.5,-2.95) {Classical \quad Quantum};
\draw[black!22] (8.3,-3.12) -- (13.3,-3.12);
\node[cel,minimum width=1.6cm,draw=none,align=right,font=\scriptsize]
  at (9.05,-3.54) {Per link,\\per chunk:};
\node[cel,minimum width=1.5cm,fill=gray!10] at (10.65,-3.54) {$2b$ bits};
\node[cel,minimum width=1.5cm,fill=blue!8,draw=blue!30] at (12.35,-3.54) {$b$ qubits};
\node[cel,minimum width=1.6cm,draw=none,align=right,font=\tiny,text=black!55]
  at (9.05,-3.96) {(setup):};
\node[cel,minimum width=1.5cm,fill=gray!5,font=\tiny,text=black!35] at (10.65,-3.96) {\textemdash};
\node[cel,minimum width=1.5cm,fill=purple!6,draw=purple!30,font=\tiny,text=purple!75!black]
  at (12.35,-3.96) {$1$ EPR pair};

\draw[sar] (3.8,-4.60) -- node[right,font=\scriptsize,text=black!48]{round 2} (3.8,-5.05);

\node[sti] at (3.8,-5.45) {Scatter-Reduce Round~2: each worker owns one fully-reduced chunk};
\node[hdr] at (0.95,-5.90) {};
\node[hdr] at (2.85,-5.90) {Chunk 1};
\node[hdr] at (4.75,-5.90) {Chunk 2};
\node[hdr] at (6.65,-5.90) {Chunk 3};
\node[wlb] at (0.95,-6.32) {Worker A};
  \node[cel,fill=Ca!30,text=black!38] at (2.85,-6.32) {$A_1$};
  \node[cel,fill=Crr]                 at (4.75,-6.32) {$\bar{g}_2$};
  \node[cel,fill=Cps!45,text=black!38] at (6.65,-6.32) {$A_3{+}C_3$};
\node[wlb] at (0.95,-6.74) {Worker B};
  \node[cel,fill=Cps!45,text=black!38] at (2.85,-6.74) {$A_1{+}B_1$};
  \node[cel,fill=Cb!30,text=black!38] at (4.75,-6.74) {$B_2$};
  \node[cel,fill=Crr]                 at (6.65,-6.74) {$\bar{g}_3$};
\node[wlb] at (0.95,-7.16) {Worker C};
  \node[cel,fill=Crr]                 at (2.85,-7.16) {$\bar{g}_1$};
  \node[cel,fill=Cps!45,text=black!38] at (4.75,-7.16) {$B_2{+}C_2$};
  \node[cel,fill=Cc!30,text=black!38] at (6.65,-7.16) {$C_3$};

\draw[sar] (3.8,-7.55) -- node[right,font=\scriptsize,text=black!48]{round 1} (3.8,-8.00);

\node[sti] at (3.8,-8.40) {All-Gather Round~1:
  \quad{\small(A$\!\to\!$B: $\bar{g}_2$,\; B$\!\to\!$C: $\bar{g}_3$,\; C$\!\to\!$A: $\bar{g}_1$)}};
\node[hdr] at (0.95,-8.85) {};
\node[hdr] at (2.85,-8.85) {Chunk 1};
\node[hdr] at (4.75,-8.85) {Chunk 2};
\node[hdr] at (6.65,-8.85) {Chunk 3};
\node[wlb] at (0.95,-9.27) {Worker A};
  \node[cel,fill=Crr]                at (2.85,-9.27) {$\bar{g}_1$};
  \node[cel,fill=Crr]                at (4.75,-9.27) {$\bar{g}_2$};
  \node[cel,draw=black!15,text=black!22] at (6.65,-9.27) {$\cdots$};
\node[wlb] at (0.95,-9.69) {Worker B};
  \node[cel,draw=black!15,text=black!22] at (2.85,-9.69) {$\cdots$};
  \node[cel,fill=Crr]                at (4.75,-9.69) {$\bar{g}_2$};
  \node[cel,fill=Crr]                at (6.65,-9.69) {$\bar{g}_3$};
\node[wlb] at (0.95,-10.11) {Worker C};
  \node[cel,fill=Crr]                at (2.85,-10.11) {$\bar{g}_1$};
  \node[cel,draw=black!15,text=black!22] at (4.75,-10.11) {$\cdots$};
  \node[cel,fill=Crr]                at (6.65,-10.11) {$\bar{g}_3$};
\node[font=\scriptsize\bfseries,text=black!70] at (11.5,-8.85) {Classical \quad Quantum};
\draw[black!22] (8.3,-9.03) -- (13.3,-9.03);
\node[cel,minimum width=1.6cm,draw=none,align=right,font=\scriptsize]
  at (8.9,-9.27) {Scatter-reduce:};
\node[cel,minimum width=1.5cm,fill=gray!8] at (10.65,-9.27) {$Pb$ bits};
\node[cel,minimum width=1.5cm,fill=blue!7,draw=blue!25] at (12.35,-9.27) {$\tfrac{Pb}{2}$ qubits};
\node[cel,minimum width=1.6cm,draw=none,align=right,font=\scriptsize]
  at (9.05,-9.69) {All-gather:};
\node[cel,minimum width=1.5cm,fill=gray!8] at (10.65,-9.69) {$Pb$ bits};
\node[cel,minimum width=1.5cm,fill=blue!7,draw=blue!25] at (12.35,-9.69) {$\tfrac{Pb}{2}$ qubits};
\draw[black!22] (8.3,-9.90) -- (13.3,-9.90);
\node[cel,minimum width=1.6cm,draw=none,align=right,font=\scriptsize\bfseries]
  at (9.05,-10.11) {Total:};
\node[cel,minimum width=1.5cm,fill=gray!15,font=\scriptsize\bfseries]
  at (10.65,-10.11) {$2Pb$ bits};
\node[cel,minimum width=1.5cm,fill=blue!14,draw=blue!40,font=\scriptsize\bfseries]
  at (12.35,-10.11) {$Pb$ qubits};

\draw[sar] (3.8,-10.50) -- node[right,font=\scriptsize,text=black!48]{round 2} (3.8,-10.95);

\node[sti] at (3.8,-11.35) {All-Gather Round~2 (Final): all workers hold all reduced chunks};
\node[hdr] at (0.95,-11.80) {};
\node[hdr,text=teal!65!black] at (2.85,-11.80) {$\bar{g}_1$};
\node[hdr,text=teal!65!black] at (4.75,-11.80) {$\bar{g}_2$};
\node[hdr,text=teal!65!black] at (6.65,-11.80) {$\bar{g}_3$};
\node[wlb] at (0.95,-12.22) {Worker A};
  \node[cel,fill=Crr] at (2.85,-12.22) {$\bar{g}_1$};
  \node[cel,fill=Crr] at (4.75,-12.22) {$\bar{g}_2$};
  \node[cel,fill=Crr] at (6.65,-12.22) {$\bar{g}_3$};
\node[wlb] at (0.95,-12.64) {Worker B};
  \node[cel,fill=Crr] at (2.85,-12.64) {$\bar{g}_1$};
  \node[cel,fill=Crr] at (4.75,-12.64) {$\bar{g}_2$};
  \node[cel,fill=Crr] at (6.65,-12.64) {$\bar{g}_3$};
\node[wlb] at (0.95,-13.06) {Worker C};
  \node[cel,fill=Crr] at (2.85,-13.06) {$\bar{g}_1$};
  \node[cel,fill=Crr] at (4.75,-13.06) {$\bar{g}_2$};
  \node[cel,fill=Crr] at (6.65,-13.06) {$\bar{g}_3$};

\end{tikzpicture}}%
\caption{Quantum ring all-reduce for $N=3$ workers, each holding a gradient
partitioned into 3 equal chunks.
\emph{Orange}: partial sum accumulated so far.\\
\emph{Teal}: fully reduced chunk $\bar{g}_k=\frac{1}{N}\sum_j g_{j,k}$.\\
\textbf{Scatter-reduce} ($N-1=2$ rounds, clockwise): each worker sends one
chunk per round; after round~2 every worker holds exactly one fully-reduced
chunk (one third of the gradient).\\
\textbf{All-gather} ($N-1=2$ rounds): each worker broadcasts its reduced chunk;
after round~2 every worker holds all chunks.\\
\emph{Right panels}: superdense coding~\cite{bw92} allows $2b$ gradient bits to
cross a ring link using only $b$ qubits, given a pre-shared EPR pair
$\ket{\Phi^+}$ loaded in setup; the online qubit cost is $Pb$ versus $2Pb$ bits
classically, a factor-of-two saving that meets the entanglement-assisted
Holevo lower bound.}
\label{fig:qring}
\end{figure}

Figure~\ref{fig:qring} shows this superdense-coding protocol worked through for $N=3$.  The construction so far is a direct application of superdense coding, included because it targets the defining cost of ring all-reduce, its per-link bandwidth; it is the simplest component of the paper, and its value is as a substrate rather than a result in its own right.  Section~\ref{subsec:GHZ} recasts the same all-reduce in terms of a shared GHZ state rather than EPR pairs, and it is this reformulation carries the conceptual
content of what follows: the optimal-probe characterisation of \Cref{alg:ghz}, the security guarantees of \Cref{sec:security}, and the structure exploited by the conflict-detection separations of \Cref{sec:tieaudit}.

\subsection{GHZ phase encoding: an equivalent formulation and its optimality} \label{subsec:GHZ}

An alternative formulation encodes gradients directly in the phases of a shared GHZ state rather than compressing classical bit strings. This view is the basis of the security protocol in \Cref{sec:security}, and it also yields the sharpest statement of
\emph{why} GHZ is the right resource.

\begin{definition}[Local phase encoding]\label{def:encoding}
Each node $j$ holds one qubit of a shared $N$-qubit pure state $\ket{\psi}$ and applies
$U_j(\theta_j)=e^{-i\theta_j Z_j/2}$, where $\theta_j$ is a $b$-bit encoding of its
gradient coordinate. Writing $H=\tfrac{1}{2}\sum_j w_j Z_j$ for weights $w_j>0$ (the equal-weight
case $w_j=1$ encodes a simple sum), the post-encoding state along the direction
$\theta_j\equiv\vartheta$ is $e^{-i\vartheta H}\ket{\psi}$.
Every node measures in the $X$ basis, broadcasts its outcome bit, and all nodes recover
the aggregate $\Phi=\sum_j\theta_j$ by parity.
\end{definition}

\begin{definition}[Quantum Fisher information~\cite{braunsteincaves}]
For the family $\ket{\psi_\vartheta}=e^{-i\vartheta H}\ket{\psi}$, the quantum Fisher
information is $F_Q=4\,\mathrm{Var}_\psi(H)$. Every locally unbiased estimator of
$\Phi$ from $\nu$ independent copies obeys
$\mathrm{Var}(\hat\Phi)\ge 1/(\nu F_Q)$.
$F_Q$ bounds the information extractable by \emph{any} measurement strategy, including
adaptive LOCC with classical broadcast.
\end{definition}

\begin{algorithm}[htbp]
\caption{Quantum all-reduce: GHZ phase encoding formulation}
\label{alg:ghz}
\begin{algorithmic}[1]
\Require{Each node $j\in\{1,\dots,N\}$ holds a gradient coordinate $\theta_j\in[0,2\pi)$ encoded to $b$-bit precision. 
\emph{AND} Each node $j$ holds a weight $w_j>0$.
\emph{AND} A fresh $N$-qubit GHZ state $\ket{\psi}=\tfrac{1}{\sqrt2}(\ket{0^N}+\ket{1^N})$ is supplied to the ring before each of the $b$ rounds below.}
\Ensure{$\hat\Phi$, an estimate of $\Phi=\sum_j w_j\theta_j \bmod 2\pi$ accurate to $b$ bits, identical at every node}
\For{$k = b-1, b-2, \dots, 0$} 
  \State Distribute one fresh copy of $\ket{\psi}$, one qubit per node
  \For{each node $j$ \textbf{in parallel}}
    \State $\theta_j^{(k)} \gets w_j\,\theta_j \cdot 2^{k}\ (\mathrm{mod}\ 2\pi)$
    \State Apply $U_j(\theta_j^{(k)}) = e^{-i\theta_j^{(k)} Z_j/2}$ to its half of $\ket{\psi}$
    \State Measure local qubit in the $X$ basis $\to$ outcome $m_j^{(k)}\in\{0,1\}$
    \State Broadcast $m_j^{(k)}$ to all other nodes \Comment{$N-1$ classical bits}
  \EndFor
  \State Every node computes the parity bit $p^{(k)} \gets \bigoplus_j m_j^{(k)}$
\EndFor
\State Every node combines $(p^{(b-1)},\dots,p^{(0)})$ into an estimate $\hat\Phi$ 
\Return $\hat\Phi$
\end{algorithmic}
\end{algorithm}

\begin{theorem}[GHZ uniquely maximises linear-aggregate precision]\label{thm:qfi}
Fix the resource model of \Cref{def:encoding}: a single $N$-qubit pure probe
state $\ket{\psi}$\footnote{That is, $\rho=\ket{\psi}\bra{\psi}$ is rank one, not a
statistical mixture. This restriction matters: $F_Q$ is convex in $\rho$, so any mixed
probe has $F_Q$ no larger than the best pure state in its decomposition, and the
GHZ-uniqueness claim would need restating in terms of $\rho$ if mixed probes were
allowed.}, encoding restricted to local phase rotations $U_j(\theta_j)$ (no
entangling gates during encoding, no ancillas beyond the $N$ encoding qubits), and
$\nu$ independent, identically prepared, unentangled copies of the protocol. For every
such $\ket{\psi}$ and weights $w_j>0$,
\[
F_Q = 4\operatorname{Var}_\psi(H) \le \Bigl(\sum_j w_j\Bigr)^2,
\]
with equality if and only if
$\ket{\psi}=\tfrac{1}{\sqrt{2}}(\ket{0^N}+e^{i\varphi}\ket{1^N})$, a GHZ-type state.
Any amplitude $\gamma$ on a non-extremal eigenspace of $H$ with eigenvalue $\lambda_\chi$ incurs the quantitative penalty
$F_Q = \Bigl(\sum_j w_j\Bigr)^2\Bigl(1 - |\gamma|^2\bigl(1 - \lambda_\chi^2/\lambda_{\max}^2\bigr)\Bigr) - 4|\gamma|^4\lambda_\chi^2$.
\end{theorem}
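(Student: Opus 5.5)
The plan is to reduce everything to a statement about a classical random variable, since $H=\tfrac12\sum_j w_j Z_j$ is diagonal in the computational basis. For a bit string $x\in\{0,1\}^N$ the eigenvalue is $\lambda_x=\tfrac12\sum_j w_j(-1)^{x_j}$. Writing $\ket{\psi}=\sum_x c_x\ket{x}$, the quantity $\operatorname{Var}_\psi(H)$ is exactly the variance of the random variable $\lambda_X$ with $\Pr[X=x]=|c_x|^2$. Because every $w_j>0$, the spectrum lies in $[-\lambda_{\max},\lambda_{\max}]$ with $\lambda_{\max}=\tfrac12\sum_j w_j$. Moreover, the two extremal eigenvalues $\pm\lambda_{\max}$ are attained \emph{only} at $x=0^N$ and $x=1^N$, so both extremal eigenspaces are one-dimensional. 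This nondegeneracy is the point where positivity of the weights is used, and it is what will force uniqueness.

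\textbf{Upper bound.} I would apply the elementary bound for a random variable $Y$ supported on $[-a,a]$:
\[
\operatorname{Var}(Y)\le \mathbb{E}[Y^2]\le a^2 .
\]
With $a=\lambda_{\max}$ this gives $F_Q=4\operatorname{Var}_\psi(H)\le 4\lambda_{\max}^2=(\sum_j w_j)^2$. The definition of $F_Q$ for pure states is taken from the paper's quantum Fisher information definition. The footnote's convexity remark explains why restricting to pure probes loses nothing for the bound itself.

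\textbf{Equality case.} Equality requires both inequalities to be tight. Tightness of $\mathbb{E}[Y^2]\le a^2$ forces all mass onto $\{\pm\lambda_{\max}\}$, that is, onto $\ket{0^N}$ and $\ket{1^N}$. Tightness of $\operatorname{Var}(Y)\le\mathbb{E}[Y^2]$ forces $\mathbb{E}[Y]=0$, so the two masses are equal: $|c_{0^N}|^2=|c_{1^N}|^2=\tfrac12$. Up to an irrelevant global phase this gives $\ket{\psi}=\tfrac1{\sqrt2}(\ket{0^N}+e^{i\varphi}\ket{1^N})$. Conversely, a direct check shows that every such state has $\langle H\rangle=0$ and $\langle H^2\rangle=\lambda_{\max}^2$, hence it attains the bound.

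\textbf{Penalty formula.} I would make the intended configuration explicit:
\[
\ket{\psi}=\alpha\ket{0^N}+\beta\ket{1^N}+\gamma\ket{\chi},
\]
where $\ket{\chi}$ is a normalised vector in a single non-extremal eigenspace with eigenvalue $\lambda_\chi$, and the extremal weights are balanced, $|\alpha|^2=|\beta|^2=(1-|\gamma|^2)/2$. Then $\langle H\rangle=|\gamma|^2\lambda_\chi$, because the $\pm\lambda_{\max}$ contributions cancel. Also $\langle H^2\rangle=(1-|\gamma|^2)\lambda_{\max}^2+|\gamma|^2\lambda_\chi^2$. Subtracting the squared mean and multiplying by $4=(\sum_j w_j)^2/\lambda_{\max}^2$ gives
\[
F_Q=\Bigl(\sum_j w_j\Bigr)^2\Bigl(1-|\gamma|^2\bigl(1-\lambda_\chi^2/\lambda_{\max}^2\bigr)\Bigr)-4|\gamma|^4\lambda_\chi^2,
\]
which is the stated penalty. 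The main obstacle is not computational but definitional. The stated formula holds only under these implicit hypotheses: one non-extremal eigenspace and a symmetric split of the remaining weight between the extremal states. If the split were unbalanced, an extra term proportional to $(|\alpha|^2-|\beta|^2)$ would appear in the mean and hence in $F_Q$. So I would state these hypotheses in the proof, and remark that both the general bound and the uniqueness argument above need none of them.
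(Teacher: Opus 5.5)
Your proof is correct and follows essentially the same route as the paper. Your chain $\operatorname{Var}(Y)\le\mathbb{E}[Y^2]\le\lambda_{\max}^2$ is Popoviciu's inequality specialised to the symmetric spectrum $[-\lambda_{\max},\lambda_{\max}]$, positivity of the $w_j$ is used in the same way to make the extremal eigenspaces one-dimensional, and the penalty comes from the same three-component variance computation. Your explicit hypotheses for that formula (a single non-extremal eigenspace and a balanced split $|\alpha|^2=|\beta|^2$) are a genuine clarification, since the paper's formula relies on them without saying so.
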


\begin{proof}
The spectrum of $H=\tfrac{1}{2}\sum_j w_j Z_j$ lies in
$[\lambda_{\min},\lambda_{\max}]=[-\tfrac{1}{2}\sum w_j,\tfrac{1}{2}\sum w_j]$.
Popoviciu's variance inequality gives
$\mathrm{Var}_\psi(H)\le\bigl(\tfrac{\lambda_{\max}-\lambda_{\min}}{2}\bigr)^2$
with equality iff the spectral distribution of $\ket{\psi}$ puts mass $\tfrac12$ on each
extremum. Since all $w_j>0$, the extreme eigenspaces are one-dimensional ($\ket{0^N}$ and
$\ket{1^N}$), forcing $\ket{\psi}$ to have the GHZ form up to a relative phase. The
penalty formula follows by direct variance computation on the three-component
decomposition.
\end{proof}

Reaching aggregate precision $\tau$ costs $\nu\ge 1/(\tau^2 F_Q)$ encoded copies, so GHZ-type states are the unique optimal probe
shot-for-shot among all $N$-qubit states for any linear aggregate $\sum_j w_j\theta_j$. This optimality is specific to linear functionals; for nonlinear functionals of the gradients, no single probe state is universally optimal and performance becomes instance- and structure-dependent.

\begin{remark}[Saturation of the Quantum Fisher Information bound by the parity protocol]
\Cref{thm:qfi} bounds $F_Q$, which by the quantum Cram\'er--Rao bound of
\Cref{def:encoding} is itself only a lower bound on the variance of any estimator;
attaining it in general requires the
optimal (symmetric-logarithmic-derivative) measurement, not the concrete $X$-basis,
parity-readout scheme of \Cref{def:encoding}.  For the GHZ state these coincide: the
SLD measurement for $H=\tfrac12\sum_j w_jZ_j$ on
$\tfrac{1}{\sqrt2}(\ket{0^N}+e^{i\varphi}\ket{1^N})$ is exactly the transversal $X$-basis
measurement with parity readout, so the protocol of \Cref{def:encoding} saturates the
bound of \Cref{thm:qfi} with no further assumptions. This is what upgrades the
theorem's claim from "GHZ uniquely maximises $F_Q$" to "GHZ is the unique optimal probe
\emph{for the stated protocol}, shot-for-shot."
\end{remark}

\section{A secure quantum ring all-reduce protocol}\label{sec:security}

The quantum communication advantage of \Cref{sec:protocol} uses entanglement as a resource established before any gradient exists.  
That entanglement pre-sharing enables a \emph{information-theoretic privacy} guarantee, that cannot be achieved by any classical protocol in a setting where an adversary observes every channel, including the setup phase. Building on the GHZ phase encoding formulation of \Cref{subsec:GHZ}, we propose the Verified GHZ Aggregation (VGA) protocol, which achieves \emph{information-theoretic private} aggregation at a cost of twice as many GHZ copies as the unverified phase encoding protocol.

\subsection{Model and definitions}

Eve is read-only on the authenticated classical channels (she sees every classical symbol but cannot modify it) and may act arbitrarily on the quantum systems in transit, including tampering; such tampering is caught, except with small probability, by the verification step of \Cref{sec:security}. 
By no-cloning~\cite{wootters1982single} she cannot copy a quantum system in transit without disturbing it, so any such attempt is caught by verification rather than yielding a silent copy. Nodes hold independent private randomness but begin with no pre-shared secrets. Leakage about node $j$'s gradient $g_j$ is measured as the mutual information $I(g_j;\,V_E)$, where $V_E$ is Eve's total accumulated view. Eve's goal is to learn as much as possible about individual nodes' gradients beyond what the aggregate output $\bar{g}$ already discloses. In distributed learning this is a concrete threat: gradient values have been shown to allow reconstruction of private training data and membership inference~\cite{zhang2022federated}; our metric $I(g_j;\,V_E)$ measures exactly how much Eve's view of the protocol exceeds what the aggregate alone reveals.

\begin{definition}[Wiretapped-setup model]\label{def:untrusted}
This model runs the same $N$-node ring topology and scatter-reduce / all-gather structure as \Cref{sec:protocol}, but with no prior correlations between nodes.  Every channel (classical and quantum) in both the setup phase (quantum state distribution) and the online phase is exposed to an adversary Eve, who is read-only on the authenticated classical channels and may act arbitrarily (including tampering) on the quantum systems in transit.  Classical channels are authenticated.  Nodes hold independent private randomness.  Leakage is $I(g_j;\,V_E)$, where $V_E$ is Eve's total view.
\end{definition}

To state security precisely we use composable simulation security: a protocol is \emph{secure} if whatever Eve learns from participating, she could equally have learned by interacting with a simulator that sees only the aggregate output $\bar{g}$ and nothing else, meaning the real protocol leaks no information that the output alone would not already reveal.

\begin{definition}[$\varepsilon$-secure aggregation (composable)~\cite{mullerquaderenner}]
\label{def:composable}
A protocol is $\varepsilon$-secure if for every adversary there exists a simulator
interacting only with an ideal functionality, which collects inputs, returns
$\bar g$ (or aborts), and reveals nothing else, such that the joint state of honest
outputs and the adversary's view in the real protocol is within trace distance
$\varepsilon$ of the corresponding state in the simulated ideal execution.
Trace-distance security composes additively.
\end{definition}

The trace-distance bound $\varepsilon$ upper-bounds the distinguishing advantage of any measurement between the real and ideal experiments.

\subsection{Classical impossibility}

\begin{theorem}[No classical IT privacy under wiretapped setup]
\label{thm:cimposs}
Any classical protocol in \Cref{def:untrusted} that computes $\bar g$ with error
probability $\epsilon$ admits an input distribution under which
\[
I(g_j;\,T_{\mathrm{full}}) \;\ge\; (1-\epsilon)\,H(g_j) - h(\epsilon),
\]
where $H(\cdot)$ denotes Shannon entropy, $h(\cdot)$ binary entropy, and
$T_{\mathrm{full}}$ is the full protocol transcript.
Vanishing leakage is therefore unachievable at any communication cost.
\end{theorem}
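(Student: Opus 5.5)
The plan is to reduce to a two-party setting and apply Fano's inequality. Fix node $j$ and choose the input distribution with $g_j$ uniform over its $b$-bit support. Fix all other nodes' gradients $g_{-j}$ to a known constant, say zero, so that $\bar g$ determines $g_j$ exactly. Since Eve is read-only on authenticated classical channels and sees every channel in both phases, her view contains the full transcript $T_{\mathrm{full}}$. The key structural fact we need is that, in a classical protocol with no pre-shared secrets, every node's output is a deterministic function of (i) the transcript $T_{\mathrm{full}}$ and (ii) that node's private randomness $R_k$ and input.

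Correctness then gives an estimator. An honest node $k\neq j$ outputs $\bar g$ with probability at least $1-\epsilon$, using only $(T_{\mathrm{full}}, R_k, g_k)$. Here $g_k$ is the fixed constant, and $R_k$ is independent of $g_j$.

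The central step is to show that conditioning on $R_k$ does not help, i.e.\ that $T_{\mathrm{full}}$ alone carries the information. I would argue via the standard classical "no secret key from public discussion" fact, the same impossibility of secret-key agreement from independent randomness cited in the introduction. Conditioned on the transcript, the private states of distinct parties remain independent: the joint distribution of $(R_k, (g_j,R_j))$ given $T_{\mathrm{full}}=t$ factorizes, which is proved by induction over messages, since each message is a function of one party's local data and past transcript. Hence $I(g_j; R_k \mid T_{\mathrm{full}})=0$, and we get
\[
I(g_j;\,T_{\mathrm{full}}) = I(g_j;\,T_{\mathrm{full}},R_k) \;\ge\; I(g_j;\,\hat g_k),
\]
where $\hat g_k$ is node $k$'s output. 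By data processing, combined with the fact that $g_j$ is recoverable from $\bar g$ given the fixed $g_{-j}$, Fano's inequality yields
\[
H(g_j\mid \hat g_k)\le h(\epsilon)+\epsilon\log(|\mathcal G|-1)\le h(\epsilon)+\epsilon H(g_j)
\]
for uniform $g_j$. Therefore $I(g_j;T_{\mathrm{full}})\ge(1-\epsilon)H(g_j)-h(\epsilon)$.

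Finally, $H(g_j)=Pb$ does not depend on communication length, so vanishing leakage is impossible at any communication cost.

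The main obstacle is the conditional-independence (rectangle) step. One must handle the case where node $k$ is not adjacent to $j$, and the case where intermediate relays re-randomize messages. The rectangle property holds for arbitrary multiparty broadcast transcripts as long as Eve sees all channels, so the induction should go through. Care is needed, though, if some channels are point-to-point rather than broadcast; the wiretapped model makes all of them visible, which is exactly what is used.
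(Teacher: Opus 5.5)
Your proof is correct. It shares the paper's skeleton: $g_j=K$ uniform, all other inputs fixed to $0$, and Fano's inequality applied to an honest node $k$'s output. Where it differs is in how you pass from $I(K;T_{\mathrm{full}},R_k)$ to $I(K;T_{\mathrm{full}})$.

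The paper does this by contradiction through a black box. If the transcript leaked at most $\ell$ bits, nodes $j$ and $k$ could agree on a key of entropy about $(1-\epsilon)H(K)-h(\epsilon)-\ell$. That would contradict the zero secret-key capacity of independent sources over authenticated public channels (Maurer; Ahlswede--Csisz\'ar).

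You instead prove directly the mechanism behind that black box. Every message is a function of its sender's local data and the earlier, fully wiretapped transcript. Hence $\Pr[D_1,\dots,D_N,T_{\mathrm{full}}=t]=\prod_l p_l(D_l)\,\phi_l(D_l,t)$ factorises over parties. So the parties' private data stay independent given $T_{\mathrm{full}}$, and $I(K;R_k\mid T_{\mathrm{full}})=0$. This needs the transcript to record who sent what to whom and in what order, which the wiretapped model supplies.

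What your route buys is a self-contained, one-shot derivation of exactly the stated bound. The paper has to turn ``node $k$ knows $K$ up to Fano uncertainty and Eve holds at most $\ell$ bits'' into an actual secret key. That step needs information reconciliation and privacy amplification, and the paper leaves it implicit while citing a result stated for rates. Your factorisation also handles non-adjacent $k$ and re-randomising relays automatically, since each helper's randomness is just another independent factor.

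The paper's route is shorter and ties the theorem to the secret-key-agreement framing the paper uses elsewhere, at the cost of those glossed conversions.
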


\begin{proof}
We reduce to secret-key agreement.  Set $g_j = K$ uniformly random and all other
nodes run honestly on input $0$, so the honest output is $\bar g = K/N$, from which
any other node $k$ recovers $K$ exactly.  By Fano's inequality,
$H(K\mid T_{\mathrm{full}}, R_k) \le h(\epsilon) + \epsilon\,H(K)$, where $R_k$ is
node $k$'s private randomness.  If the transcript satisfied
$I(K;\,T_{\mathrm{full}})\le\ell$, then nodes $j$ and $k$ (starting from
independent randomness) could agree, via authenticated public discussion over the
already-transmitted transcript, on a key of entropy at least
$H(K)-\ell-h(\epsilon)-\epsilon\,H(K)$ hidden from any passive observer. The secret-key capacity of independent sources over public authenticated channels is zero~\cite{maurer1993,ahlswede1993}; hence $\ell \ge (1-\epsilon)H(K)-h(\epsilon)$.
\end{proof} 

This closes, in the strongest form, the ``pre-shared one-time pad'' objection:
classical masking schemes~\cite{bonawitz2017} achieve information-theoretically private
aggregation if pads can be delivered secretly, and \Cref{thm:cimposs} shows that
a fundamental information-theoretic barrier when every channel is observed.

\subsection{Verified GHZ aggregation} 

The VGA protocol builds on the phase encoding formulation of \Cref{def:encoding}. The challenge of this setup is that Eve can observe the GHZ distribution step  and may tamper with the states en route. The VGA protocol handles this attack by executing a random stabilizer test before any gradient is encoded: surviving copies are certified close to genuine GHZ states, and because certification happens before inputs arrive, abort decisions are input-independent, mirroring QKD protocols, where a random subset of transmitted systems is sacrificed for parameter estimation before any key is extracted.

\begin{tcolorbox}[colback=teal!6, colframe=teal!50,
  title=\textbf{Protocol VGA (Verified GHZ Aggregation), per gradient coordinate}]
\textbf{Step~1 (Distribution)}
A designated node prepares $2t$ copies of
$\ket{\mathrm{GHZ}_N}=\tfrac{1}{\sqrt{2}}(\ket{0^N}+\ket{1^N})$ and routes one qubit
of each copy to every other node around the ring.

\textbf{Step~2 (Verification)}
A public random string selects $t$ copies and, for each, announces a uniformly random
nontrivial stabilizer $g\in\mathcal{S}\setminus\{I\}$ of the GHZ group $\mathcal{S}$.
Every such $g$ is a tensor product of single-qubit Paulis, so all nodes measure
locally and broadcast outcomes. \textit{Abort} unless every tested $g$ yields $+1$.

\textbf{Step~3 (Aggregation)}
On the surviving $t$ copies, run phase encoding aggregation (\Cref{def:encoding}):
node $j$ applies $e^{-i\theta_j Z_j/2}$ to its share of each copy, all nodes
measure in the $X$ basis and broadcast the outcome bit, and the aggregate
$\Phi=\sum_j\theta_j$ is recovered by parity.
\end{tcolorbox}

We establish security in two layers. On exact GHZ states, every classical bit broadcast during readout is simulatable from the aggregate alone, so the protocol leaks nothing beyond its output (\Cref{lem:sim,lem:vga-ideal}). Verification then certifies that the accepted states are close to exact GHZ, and security degrades only gracefully in that distance (\Cref{prop:vga-robust}); the closeness itself is bounded by the sampling analysis that follows.

\begin{lemma}[Per-copy privacy, including the broadcast readout]\label{lem:sim}
For exact GHZ states, the joint view of any coalition 
$S\subsetneq[N]$ (their qubits, any applied deviations, Eve's residual system, and \emph{all classical bits
broadcast during the readout}) is simulatable from the coalition's own inputs
and the honest-sum $\Phi_H = \sum_{j\notin S}\theta_j$.  
Individual gradients $\theta_j$ for $j\notin S$ are never revealed beyond what $\Phi_H$ implies.
\end{lemma}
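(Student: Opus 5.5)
The plan is to write down the joint state of the coalition's qubits right after the honest nodes have encoded and measured, and show it depends on the honest inputs only through $\Phi_H$ (together with honest outcome bits that are uniformly random). Write $H=[N]\setminus S$ for the honest set, which is nonempty since $S\subsetneq[N]$.

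\textbf{Step 1: remove Eve's residual system.} Because the copy is an exact $\ket{\mathrm{GHZ}_N}$, a pure state on the $N$ node qubits, any purification held by Eve is in product with it. Her residual system is therefore independent of everything and can be handed to the simulator unchanged, so the adversary effectively consists of the coalition $S$ alone.

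\textbf{Step 2: let the honest operations go first.} The honest encodings $e^{-i\theta_j Z_j/2}$ and $X$-basis measurements act on tensor factors disjoint from the coalition's qubits. They therefore commute with any (possibly deviating) coalition operation. The honest broadcast bits are also produced without reference to anything the coalition announces. So, whatever the adaptivity, the joint distribution of the coalition's view equals the one obtained by letting all honest nodes act first.

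\textbf{Step 3: compute the post-measurement state.} After the honest rotations the state is, up to global phase,
\[
\tfrac{1}{\sqrt2}\bigl(\ket{0^N}+e^{i\Phi_H}\ket{1^N}\bigr),
\]
where the convention factor from $e^{-i\theta Z/2}$ only changes the relative phase by the sum. Expand each honest qubit in the $X$ basis using $\ket{0}=(\ket{+}+\ket{-})/\sqrt2$ and $\ket{1}=(\ket{+}-\ket{-})/\sqrt2$. A direct computation then gives two facts:
\begin{itemize}
\item the honest outcome string $m_H\in\{0,1\}^{|H|}$ is uniformly distributed;
\item conditioned on $m_H$, the coalition holds $\tfrac{1}{\sqrt2}\bigl(\ket{0^{|S|}}+e^{i(\Phi_H+\pi\bigoplus_{j\in H}m_j)}\ket{1^{|S|}}\bigr)$.
\end{itemize}
When $S=\emptyset$ there are no coalition qubits, and the parity of $m_H$ alone carries $\Phi_H$ through its bias.

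\textbf{Step 4: build the simulator.} On input $\Phi_H$ and the coalition's own inputs, the simulator does the following:
\begin{enumerate}
\item it samples the honest outcome bits so that their joint law matches the honest broadcast;
\item it prepares the corresponding $|S|$-qubit GHZ state on the coalition's registers with phase $\Phi_H$ shifted by $\pi$ times the parity;
\item it passes these, together with Eve's independent residual system, to the real adversary.
\end{enumerate}
By Steps 1–3 this reproduces the real joint view exactly, including all readout broadcasts. So nothing about individual $\theta_j$, $j\notin S$, is revealed beyond $\Phi_H$.

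\textbf{Main obstacle.} I expect Step 3 to be the delicate point when $S=\emptyset$ or the coalition itself measures. There the statistics of $m_H$ are not uniform: the parity is biased by $\cos\Phi_H$. So the honest bits must be sampled jointly with the coalition's outcomes from the distribution determined by $\Phi_H$, rather than marginally uniform. I would handle this uniformly by sampling from the full $N$-qubit GHZ parity distribution, which depends on the inputs only through $\Phi_H+\Phi_S$ and is therefore computable from $\Phi_H$ and the coalition's inputs.
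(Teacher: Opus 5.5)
Your argument is correct, but it runs in the opposite order from the paper's. The paper lets the coalition act first. It writes the post-encoding state with relative phase $\Phi_H+\phi_S$, conditions on the coalition's measurement outcome $\omega$, and has the simulator draw the honest parity from $\tfrac12\bigl(1+\Re\, e^{i(\Phi_H+\phi_S)}c_\omega\bigr)$. You instead note that honest operations act on disjoint tensor factors and that honest broadcasts do not depend on coalition messages, so the honest nodes may be taken to measure first. Your simulator then samples the honest outcome string $m_H$ (uniform when $S\neq\varnothing$, parity biased by $\cos\Phi_H$ when $S=\varnothing$). It hands the adversary the exact conditional state $\tfrac{1}{\sqrt2}\bigl(\ket{0^{|S|}}+e^{i(\Phi_H+\pi\bigoplus_{j\notin S}m_j)}\ket{1^{|S|}}\bigr)$.

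The paper's version is shorter and shows the dependence on $\Phi_H$ in a single formula. However, it implicitly treats the coalition's deviation as a phase $\phi_S$ followed by a measurement known to the simulator, and it specifies only the law of the honest parity, not the coalition's full residual quantum view. Yours simulates the coalition's registers at the level of quantum states. It therefore covers arbitrary deviations (non-diagonal unitaries, ancillas, entanglement with Eve, rushing on the honest broadcasts) without ever naming the coalition's measurement. Your Step~1 also explains, via purity of the exact GHZ copy, why Eve's residual system is input-independent, which the paper leaves implicit.

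One correction to your closing paragraph. For $S\neq\varnothing$ the marginal law of $m_H$ is uniform whatever the coalition does. Only its correlation with the coalition's own outcomes depends on $\Phi_H$, and that correlation arises automatically when the adversary acts on the conditional state you supply. The fallback of sampling from the full $N$-qubit parity distribution, which is essentially the paper's route, is therefore unnecessary. It would also be weaker than your Step~4, because it requires the coalition's measurement to be known.
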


\begin{proof}
After honest encoding the shared state is
$\tfrac{1}{\sqrt{2}}\bigl(\ket{0^N}+e^{i\Phi_H}e^{i\phi_S}\ket{1^N}\bigr)$,
where $\phi_S$ is the phase applied by the coalition (known to the simulator).
The distribution of the honest nodes' $X$-measurement outcomes, conditioned on any
coalition measurement with outcome $\omega$, depends on the state only through the
relative phase $\Phi_H+\phi_S$:
\[
\Pr\!\Bigl[\bigoplus_{j\notin S} b_j = 0 \;\Big|\; \text{coalition outcome } \omega\Bigr]
= \tfrac{1}{2}\bigl(1 + \Re\,e^{i(\Phi_H+\phi_S)}c_\omega\bigr),
\]
where $c_\omega$ depends only on the coalition's (known) measurement.  The simulator
samples honest outcome bits from this distribution; individual $\theta_j$ never appear
beyond their contribution to $\Phi_H$.
\end{proof}

\begin{lemma}[Ideal-GHZ security, exact]\label{lem:vga-ideal}
On exact GHZ copies, VGA realises the ideal sums-only functionality of \Cref{def:composable} exactly: there is a simulator interacting only with the ideal
functionality (which returns $\Phi=\sum_j\theta_j$ or aborts) whose output is
identical to the real execution. In particular $\varepsilon=0$.
\end{lemma}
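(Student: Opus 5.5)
I would build the simulator explicitly and check, step by step, that its output has the same distribution as the real execution. Here ``same'' means the joint state of the honest outputs and the adversary's view in \Cref{def:composable} is identical. The protocol splits into three phases, Distribution, Verification and Aggregation. The verification phase happens before any input is encoded, so I would treat it separately from the aggregation phase and then combine the two by a hybrid argument with zero error in each step.

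\textbf{Step 1: Distribution and Verification.} On exact GHZ copies, every nontrivial stabilizer $g\in\mathcal{S}$ returns $+1$ with certainty. The local Pauli outcomes broadcast during Step~2 therefore follow a fixed, input-independent distribution. It is the distribution of local Pauli measurements on $\ket{\mathrm{GHZ}_N}$ for the announced $g$, conditioned on a product of $+1$. The public random string selecting the tested copies is also input-independent. The simulator can sample all of Step~2 itself, with no call to the ideal functionality. There is no abort on exact copies, so the abort decision is trivially simulatable too. Adversarial deviations by a coalition $S$ during this phase act only on the coalition's own qubits, before any $\theta_j$ exists. The simulator can run those deviations on its own locally prepared GHZ copies.

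\textbf{Step 2: Aggregation.} The simulator sends the coalition's inputs to the ideal functionality and receives $\Phi$. It computes $\Phi_H=\Phi-\sum_{j\in S}\theta_j$. It then invokes \Cref{lem:sim} once for each surviving copy. That lemma already covers the coalition's qubits, its deviations, Eve's residual system, and all broadcast $X$-outcome bits. The output is a per-copy simulation that depends only on $\Phi_H$ (through the relative phase $\Phi_H+\phi_S$) and on the coalition's own data. The copies are independent GHZ states, and the only common input is $\Phi_H$. The joint distribution over the $t$ copies is therefore the product of the per-copy distributions, conditioned on the verification transcript. Since each factor is reproduced exactly, the product is reproduced exactly. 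The honest outputs are parities computed from the same simulated bits, so the joint state of outputs and view matches too. In the fully honest case $S=\emptyset$, $\Phi_H=\Phi$ and Eve sees only broadcast bits and the public string.

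\textbf{Main obstacle.} I expect the hardest part to be justifying the conditional independence across copies and phases. Eve can hold a residual system that she entangled with the qubits in transit during Distribution. For \emph{exact} GHZ copies, I would argue that any such system is in product with the pure GHZ states, because the copies are pure. This lets Eve's residual register be simulated separately, as a fixed state. Once that is in place, the correlations in the transcript all run through $\Phi_H$ and the public randomness. The two simulated phases then compose with total trace distance $0+0$, which gives $\varepsilon=0$.
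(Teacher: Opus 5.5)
Your proposal follows the paper's proof: distribution and verification are input-independent and are reproduced by the simulator (the paper has it run them against the real Eve), the readout is simulated copy by copy from $\Phi$ via \Cref{lem:sim}, and the copies are composed; your observation that an exactly pure GHZ register must be in product with Eve's system makes explicit the justification for that composition, which the paper leaves implicit. The paper only needs the external-Eve case $S=\varnothing$, and your extension to a general coalition would need more care: a coalition acting jointly or adaptively on its shares of different copies breaks the per-copy product structure you invoke, so the simulator would have to run that joint strategy itself rather than multiply per-copy simulations.
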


\begin{proof}
Distribution and verification are input-independent, so the simulator runs them against the real Eve, reproducing the accepted state and the abort decision identically. Eve's residual view is the broadcast readout together with her side system; this is the case $S=\varnothing$ of \Cref{lem:sim} (a purely external, passive Eve controls no inputs), for which $\Phi_H=\sum_j\theta_j=\Phi$, so the readout is simulatable exactly from $\Phi$. Composing over the $t$ surviving copies gives an exact simulator; hence $\varepsilon=0$.
\end{proof}

\begin{proposition}[Conditional robust security of VGA]\label{prop:vga-robust}
Condition on acceptance, and suppose the joint state $\rho_{C\bar T E}$ of the
$t$ surviving copies (honest register $C$) and Eve's system $E$ satisfies
\[
  \bigl\lVert \rho_{C\bar T E} - \tilde\rho_{C\bar T E}\bigr\rVert_1 \;\le\; \delta,
\]
where $\tilde\rho$ has honest register equal to $t$ exact GHZ copies
$\lvert\mathrm{GHZ}_N\rangle^{\otimes t}$ (Eve may hold arbitrary side
information). Then VGA is $\delta$-secure per coordinate in the sense of \Cref{def:composable}.
\end{proposition}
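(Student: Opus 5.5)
The approach is a hybrid argument: the real execution conditioned on acceptance is compared with the same execution run on the idealised state $\tilde\rho_{C\bar T E}$. For that idealised state \Cref{lem:vga-ideal} already provides an exact simulator. We then use the fact that trace distance does not increase under the remaining protocol steps. Concretely, we fix an arbitrary adversary and model everything after acceptance as one completely positive trace-preserving map $\mathcal{E}_\theta$. This map depends on the honest inputs $\theta=(\theta_j)_j$ and acts on $C\bar T E$: honest nodes apply $e^{-i\theta_j Z_j/2}$ to their shares, measure in the $X$ basis, and broadcast the outcome bits, while Eve applies her own arbitrary operation to $E$ together with everything she observes. The output of $\mathcal{E}_\theta$ is the joint state of the honest outputs and Eve's final view, which is exactly the object compared in \Cref{def:composable}.

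The steps, in order, are as follows.

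\textbf{Step 1: simulator on the ideal state.} On $\tilde\rho$, whose honest register is $\lvert\mathrm{GHZ}_N\rangle^{\otimes t}$ with Eve holding arbitrary side information, \Cref{lem:sim} applies with $S=\varnothing$. Since the honest register is pure, it is in product with $E$, so Eve's side system carries no correlation with the readout. The broadcast bits are therefore sampled from a distribution depending only on $\Phi$. This gives a simulator $\mathsf{Sim}$, querying only the ideal functionality, such that $\mathcal{E}_\theta(\tilde\rho)=\mathsf{Sim}^{\Phi}(\tilde\rho_E)$ exactly.

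\textbf{Step 2: same simulator on the real state.} We use that simulator with Eve's actual marginal, so the ideal-world state is $\mathsf{Sim}^{\Phi}(\rho_E)$. Because $\tilde\rho_E$ and $\rho_E$ are partial traces of states that are $\delta$-close, this substitution costs at most $\delta$.

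\textbf{Step 3: triangle inequality.} We bound
\[
\bigl\lVert \mathcal{E}_\theta(\rho)-\mathsf{Sim}^{\Phi}(\rho_E)\bigr\rVert_1
\le \bigl\lVert \mathcal{E}_\theta(\rho)-\mathcal{E}_\theta(\tilde\rho)\bigr\rVert_1
+ \bigl\lVert \mathsf{Sim}^{\Phi}(\tilde\rho_E)-\mathsf{Sim}^{\Phi}(\rho_E)\bigr\rVert_1 .
\]
A naive bound gives $2\delta$. To obtain exactly $\delta$, we define the simulator so that it runs the real distribution and verification against Eve, as in \Cref{lem:vga-ideal}. It then replaces only the honest register by fresh GHZ copies, generated by the map $\mathcal{R}$ that traces out $C$ and appends $\lvert\mathrm{GHZ}_N\rangle^{\otimes t}$, before applying $\mathcal{E}_\theta$. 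By Step~1 this is simulatable from $\Phi$, and the ideal world equals $\mathcal{E}_\theta\circ\mathcal{R}(\rho)$.

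\textbf{Step 4: contractivity.} We then only need $\lVert \mathcal{E}_\theta(\rho)-\mathcal{E}_\theta\mathcal{R}(\rho)\rVert_1\le\delta$. Contractivity gives this provided $\mathcal{R}(\rho)$ can be replaced by $\tilde\rho$, which holds when $\tilde\rho=\lvert\mathrm{GHZ}\rangle\langle\mathrm{GHZ}\rvert^{\otimes t}\otimes\tilde\rho_E$ with $\tilde\rho_E=\rho_E$. The hypothesis should therefore be read as allowing $\tilde\rho$ to be chosen with this matching marginal, or we absorb the mismatch as above. Aborts are handled identically in both worlds, since the acceptance event is input-independent and is reproduced by the simulator. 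The conditioning therefore introduces no extra term.

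\textbf{Main obstacle.} The delicate point is Step~3, specifically the constant in front of $\delta$ and the status of Eve's side information. \Cref{lem:sim} is stated for exact GHZ, which is pure and hence decoupled from $E$. This decoupling is what makes the simulator's independence from $E$ legitimate, and I would argue it first from purity of the honest marginal. If the hypothesis cannot be strengthened to a product $\tilde\rho$ with $\tilde\rho_E=\rho_E$, I would fall back to the honest bound $2\delta$. Since the hypothesis fixes the constant only up to this convention, the stated $\delta$ would follow after rescaling the assumed distance.
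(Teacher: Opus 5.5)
Your argument proves only $2\delta$-security, and neither of your devices for recovering the stated constant is valid. Assuming $\tilde\rho_E=\rho_E$ is a strictly stronger hypothesis than the proposition makes. ``Rescaling the assumed distance'' proves a weaker statement, namely that $\delta/2$-closeness implies $\delta$-security.

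The fix in Step~3 does not help either. Take $\mathcal{R}(\rho)=\lvert\mathrm{GHZ}_N\rangle\langle\mathrm{GHZ}_N\rvert^{\otimes t}\otimes\rho_E$. Contractivity then still requires $\lVert\rho-\mathcal{R}(\rho)\rVert_1\le\delta$. The hypothesis, with $\tilde\rho=\lvert\mathrm{GHZ}_N\rangle\langle\mathrm{GHZ}_N\rvert^{\otimes t}\otimes\tilde\rho_E$ by purity of the honest marginal, only gives $\lVert\rho-\mathcal{R}(\rho)\rVert_1\le\lVert\rho-\tilde\rho\rVert_1+\lVert\tilde\rho_E-\rho_E\rVert_1\le2\delta$. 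The factor two comes entirely from your self-imposed requirement that the simulator act on Eve's actual marginal $\rho_E$.

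The missing observation, which is the whole of the paper's proof, is that the ideal-world state can be taken to be $\mathcal{E}_\theta(\tilde\rho)$ itself. This is a legitimate simulator output under \Cref{def:composable}, for three reasons:
\begin{itemize}
\item the simulator may depend on the adversary;
\item $\tilde\rho$ is a fixed state determined by the adversary and independent of the inputs, since distribution and verification precede encoding;
\item your own Step~1 shows that $\mathcal{E}_\theta(\tilde\rho)=\mathsf{Sim}^{\Phi}(\tilde\rho_E)$ depends on $\theta$ only through $\Phi$. The simulator can therefore prepare $\tilde\rho_E$ and fresh GHZ copies, encode any $\theta'$ with $\sum_j\theta'_j=\Phi$, and run the rest of the protocol, including Eve's processing.
\end{itemize}
Real and ideal worlds are then images of $\rho$ and $\tilde\rho$ under the same channel. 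One application of contractivity gives $\lVert\mathcal{E}_\theta(\rho)-\mathcal{E}_\theta(\tilde\rho)\rVert_1\le\lVert\rho-\tilde\rho\rVert_1\le\delta$, with no triangle inequality and no condition on $\tilde\rho_E$. The abort event is handled as you describe.

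Your instinct that the simulator should replay the real distribution phase against Eve would matter in a stricter, real-time simulation model. The definition used here only asks that some adversary-dependent simulator's final joint output be close, and that latitude is exactly what yields $\delta$ rather than $2\delta$.
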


\begin{proof}
The remaining protocol (input encoding $U_j(\theta_j)$ on each honest share,
$X$-basis measurement, broadcast, and Eve's processing) is a fixed channel
$\Lambda_\theta$ depending on the inputs but not on the probe state. By
Lemma~\ref{lem:vga-ideal}, $\Lambda_\theta(\tilde\rho)$ equals the ideal
simulator output $\mathsf{S}(\Phi)$. Since trace distance is non-increasing
under the common channel $\Lambda_\theta$,
\[
  \bigl\lVert \Lambda_\theta(\rho)-\mathsf{S}(\Phi)\bigr\rVert_1
  =\bigl\lVert \Lambda_\theta(\rho)-\Lambda_\theta(\tilde\rho)\bigr\rVert_1
  \le \bigl\lVert \rho-\tilde\rho\bigr\rVert_1 \le \delta .
\]
Abort is input-independent and reproduced by the simulator, and trace-distance
security composes additively, giving $\delta$-security per coordinate.
\end{proof}

The verification of Step~2 is a Bouman--Fehr sampling strategy~\cite{BoumanFehr}
in the GHZ-stabiliser eigenbasis; a uniformly random nontrivial stabiliser flags
a corrupted copy ($\vec\sigma\neq0$) with probability $\ge\tfrac12$. Read at the
level of the purification, acceptance certifies (except with probability $\varepsilon_{\mathrm{samp}}$) that the surviving ensemble is within trace distance $\delta=O\!\bigl(\sqrt{\varepsilon_{\mathrm{samp}}}+\sqrt{\beta}\bigr)$
of an ideal-GHZ register, with $\beta=O\!\bigl(\log(1/\varepsilon_{\mathrm{samp}})/t\bigr)$.
Substituting this $\delta$ into Proposition~\ref{prop:vga-robust} yields the
security claim, with overhead a factor $2$ in GHZ copies and $O(1)$ broadcast
bits per tested copy.

We emphasise that this certification step is asymptotic. Three refinements remain for deployment-grade constants: 
\begin{enumerate}
    \item The $\tfrac12$ detection probability requires
an imperfect-detection Bouman--Fehr bound, or stabiliser amplification, to fix the
constant in $\beta$; 
    \item The entropic-uncertainty/decoupling conversion
\cite{TomamichelRenner,Renner} is used here in its asymptotic, leading-order form; the explicit finite-key correction terms, which would convert it into a security parameter at finite copy count $t$, are left to future work;
    \item A noise-tolerant acceptance threshold replacing the
perfect-pass rule remains to be established.
\end{enumerate}

None of these affects \Cref{prop:vga-robust} or the classical impossibility of
\Cref{thm:cimposs}; they bound only the achievable $\delta$.

\subsection{Multi-round privacy}

In practice, trainings with ring all-reduce are often completed in $T$ rounds. After observing $T$ consecutive aggregate sums, an adversary might try to piece together information about individual node gradients that no single sum would reveal alone. The theorem below shows our protocol is safe against this type of attack:  the coalition's full multi-round view is provably no more informative than the sequence of sums itself, VGA's per-round privacy extends to $T$ rounds at no extra cost.

\begin{theorem}[View reduction across rounds]\label{thm:view}
For $T$ rounds of ideal VGA (exact GHZ each round) against a static coalition $S\subsetneq[N]$, there exists a simulator that, given only the coalition's own inputs and the sequence of honest sums $\{\Phi_H^{(r)}\}_{r=1}^T$, reproduces the coalition's joint view (qubits, applied deviations, and all broadcast bits across all rounds) exactly. Consequently,
\[
I\!\bigl(\{g_j^{(r)}\}_{j\notin S,\,r\le T};\;\mathrm{view}_S\bigr)
\;=\;
I\!\bigl(\{g_j^{(r)}\}_{j\notin S,\,r\le T};\;\{\Phi_H^{(r)}\}_{r\le T}\bigr).
\]
Multi-round leakage collapses, with no loss, to the information in the sequence of
aggregates.
\end{theorem}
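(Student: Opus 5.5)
The plan is to build the $T$-round simulator by composing the per-round simulators of \Cref{lem:sim}, using the independence of the rounds' resources, and then to read off the mutual-information identity from the fact that the simulated view is a (randomised) function of the coalition's inputs and the honest sums alone.

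\emph{Step 1: round structure.} Each round $r$ uses fresh GHZ copies that are distributed independently of all previous rounds. Because the coalition is static, its view in round $r$ is generated by three ingredients: the coalition's inputs $\{g_j^{(r)}\}_{j\in S}$, its internal strategy (which may depend adaptively on its view from rounds $1,\dots,r-1$), and the honest parties' contribution. Under exact GHZ, \Cref{lem:sim} shows that the honest contribution enters only through the relative phase $\Phi_H^{(r)}+\phi_S^{(r)}$, where $\phi_S^{(r)}$ is the coalition's applied phase. For each fixed history, I will therefore write the conditional law of the round-$r$ view given the history as a kernel $K_r\bigl(\cdot\mid \text{history}, \text{coalition inputs}, \Phi_H^{(r)}\bigr)$.

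\emph{Step 2: hybrid simulator.} The simulator proceeds round by round. It runs the coalition's strategy on the previously simulated history, prepares the coalition's share of a fresh GHZ state itself, and samples the honest broadcast bits from the distribution of \Cref{lem:sim}. This gives exact equality of the one-round marginal conditional on the history.

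A chain-rule (hybrid) argument then gives exact equality of the joint distributions over all $T$ rounds. The key observation is that the honest inputs of round $r$ affect rounds $r'>r$ only through the coalition's own recorded view, since the GHZ states are fresh and the honest nodes keep no cross-round state visible to $S$. Because every hybrid step is exact, no error accumulates and no application of \Cref{lem:chain} is needed.

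\emph{Step 3: the identity.} Let $G_H$ denote the honest gradients and $\Phi_H^{1:T}$ the sequence of honest sums. Since $\mathrm{view}_S$ has the same joint distribution with $G_H$ as the simulator's output $\mathsf{Sim}(\text{coalition inputs},\Phi_H^{1:T},R)$, with $R$ independent randomness, we obtain the Markov chain $G_H\to\Phi_H^{1:T}\to\mathrm{view}_S$, conditional on the coalition inputs, which are independent of $G_H$ or treated as fixed. Data processing then gives $I(G_H;\mathrm{view}_S)\le I(G_H;\Phi_H^{1:T})$. For the reverse inequality, note that the coalition can recover each $\Phi_H^{(r)}$ from its view: given the announced parity readout and its own phases, it knows the output $\Phi^{(r)}$ and subtracts its own contribution. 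Hence $\Phi_H^{1:T}$ is a function of $\mathrm{view}_S$, and the two mutual informations are equal.

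\emph{Main obstacle.} I expect the hardest step to be justifying the conditional independence in Step 2 under adaptive coalition strategies. The subtlety is that the coalition's round-$r$ deviation may depend on earlier broadcast bits, which are themselves correlated with earlier $\Phi_H$. I would handle this by fixing the coalition's strategy as a sequence of measurable functions of its past view and checking that the kernel in \Cref{lem:sim} holds pointwise for each deviation. A second point to verify is that the output $\Phi^{(r)}$ is indeed part of the view, since this is needed for the reverse inequality. If it is only approximately decodable with finitely many copies, I would state the equality with the sums replaced by their parity-estimates.
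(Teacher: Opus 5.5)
Your proposal is correct and follows essentially the paper's route: the paper likewise builds the $T$-round simulator inductively, composing $\mathrm{Sim}_{T-1}$ with the single-round simulator of \Cref{lem:sim} after conditioning on the public history (the model iterate being a deterministic function of past broadcasts), and then reads the identity off exact simulation. Your Step~3 is more careful than the paper, which invokes only data processing and so obtains only $\le$; your recoverability argument supplies the reverse inequality, provided, as you flag, that the output is part of the coalition's view, since a coalition deviating from the $X$-basis readout need not learn $\Phi_H^{(r)}$.
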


\begin{proof}
We proceed by induction on $T$.

\emph{Base case ($T=1$):}  This is \Cref{lem:sim}.

\emph{Inductive step:}  Assume the claim holds for $T-1$ rounds: there exists a
simulator $\mathrm{Sim}_{T-1}$ that reconstructs the coalition's view for rounds
$1,\ldots,T-1$ from the coalition's inputs and $\{\Phi_H^{(r)}\}_{r<T}$ alone.
For round $T$, the only cross-round information is \emph{public}: the model iterate
$w^{(T-1)}$ is a deterministic function of past broadcasts, already present in
$\mathrm{Sim}_{T-1}$'s output.
Conditioned on this shared public state, round $T$ is a fresh independent run of VGA
encoding $\Phi_H^{(T)}$; \Cref{lem:sim} supplies a simulator $\mathrm{Sim}_1^{(T)}$
for this single round from $\Phi_H^{(T)}$ alone.
Composing $\mathrm{Sim}_{T-1}$ with $\mathrm{Sim}_1^{(T)}$ gives the joint simulator
for all $T$ rounds.
The mutual information identity follows: exact simulation means the real and ideal joint
distributions coincide, so by the data-processing inequality
$I(\text{hidden gradients};\,\mathrm{view}_S)=I(\text{hidden gradients};\,\{\Phi_H^{(r)}\})$.
\end{proof}

\Cref{thm:view} reduces the question of ``What does running VGA for $T$ rounds reveal?'' 
to the question ``What does the sequence of sums reveal?''. 
For Gaussian inputs with a static personal component
$\mu_j$ contributing through the \emph{same} linear functional $M=\sum_j\mu_j$ every
round, data processing gives $I(\mu_1;\,\{\Phi_H^{(r)}\}_{r\le T})\le\tfrac12\ln\tfrac{N}{N-1}$,
a bound that is independent of $T$.  In regimes where gradients genuinely change across
rounds, the sums-only reduction makes Gaussian differential privacy efficient: nodes clip
and add local noise inside their phase encoding, the $N$ independent noise contributions
pool in the aggregate, and the released sum has $\sqrt{N}$ times less per-node noise
than local differential privacy (DP)~\cite{kairouz2021,dwork2006calibrating} (a framework
that bounds how much any single node's data can shift the output distribution, with
\emph{local} DP meaning each node adds calibrated noise before any sharing).

The following lemma extends these guarantees from the ideal oracle to the real VGA
protocol against a fully adaptive adversary.

\begin{lemma}[Adaptive chaining]\label{lem:chain}
Let $\Pi$ be $T$-round real VGA and $\Pi^\star$ the ideal sums-only oracle.  If each
round of $\Pi$, against any adversary adaptive on the public history, is
$\varepsilon_r$-close in trace distance to the corresponding ideal round, then $\Pi$
and $\Pi^\star$ are $\bigl(\sum_{r=1}^T\varepsilon_r\bigr)$-close in trace distance,
even against a fully adaptive cross-round adversary.
\end{lemma}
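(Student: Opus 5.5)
The proof is a standard hybrid argument. For $r=0,1,\dots,T$ define the hybrid execution $H_r$ in which rounds $1,\dots,r$ are run through the ideal sums-only oracle $\Pi^\star$ (that is, replaced by the per-round simulator of \Cref{lem:vga-ideal}/\Cref{prop:vga-robust} fed only the round's aggregate) and rounds $r+1,\dots,T$ are run as real VGA. Then $H_0=\Pi$ and $H_T=\Pi^\star$. We fix an arbitrary fully adaptive adversary $\mathcal{A}$ and view the whole interaction as a sequence of CPTP maps acting on a joint register. This register holds the honest outputs so far, the public transcript, and Eve's quantum memory $E$, which may be entangled across rounds. By the triangle inequality,
\[
\bigl\lVert \Pi(\mathcal{A}) - \Pi^\star(\mathcal{A})\bigr\rVert_1 \le \sum_{r=1}^T \bigl\lVert H_{r-1}(\mathcal{A}) - H_{r}(\mathcal{A})\bigr\rVert_1 ,
\]
so it suffices to bound each consecutive pair by $\varepsilon_r$.

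Fix $r$. The hybrids $H_{r-1}$ and $H_r$ share the same prefix, namely rounds $1,\dots,r-1$ ideal. This prefix produces some state $\sigma_{r-1}$ on the public history and Eve's memory. They also share the same suffix, rounds $r+1,\dots,T$ real, whose action is some CPTP map $\mathcal{E}_{>r}$ that may depend adaptively on the public history and on $E$. The two hybrids differ only in round $r$, where one applies the real round map $\mathcal{R}_r$ and the other the ideal map $\mathcal{I}_r$. Both maps take $\sigma_{r-1}$ as input, including Eve's adaptively chosen strategy for round $r$, which is a function of the public history and her memory.

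The key step is to treat the prefix, namely Eve's accumulated memory and the history, as an adversary for a single round that is adaptive on the public history. The hypothesis then applies directly: $\lVert(\mathcal{R}_r\otimes\mathrm{id})(\sigma_{r-1}) - (\mathcal{I}_r\otimes\mathrm{id})(\sigma_{r-1})\rVert_1\le\varepsilon_r$. The identity factor acts on Eve's retained memory, which is why the per-round closeness must be stated in the composable, side-information-inclusive form of \Cref{def:composable}. Because trace distance cannot increase under CPTP maps, applying $\mathcal{E}_{>r}$ preserves the bound, giving $\lVert H_{r-1}-H_r\rVert_1\le\varepsilon_r$. Summing over $r$ yields $\sum_r\varepsilon_r$. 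If a round aborts, both the real and ideal maps reproduce the abort identically, since abort is input-independent, so aborts need no separate treatment.

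The main obstacle is this reduction step, for two reasons. First, we must check that the per-round hypothesis quantifies over adversaries holding arbitrary quantum side information correlated with earlier rounds, not merely classical histories. Second, we must check that the suffix map $\mathcal{E}_{>r}$ is well defined independently of which hybrid precedes it. This is where full adaptivity enters, and it works because the honest nodes' later behaviour depends only on public data and fresh GHZ copies, not on secret state from round $r$. I would settle both points by writing the per-round guarantee with an explicit reference system purifying $E$. Fresh GHZ distribution in each round, as in the setup of \Cref{thm:view}, makes the honest side's round-$r$ resources independent of the past given the public history.
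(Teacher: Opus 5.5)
Your proposal is correct and follows essentially the paper's proof: the same hybrids $H_0=\Pi,\dots,H_T=\Pi^\star$ (ideal on the first $r$ rounds, real afterwards), the per-round hypothesis applied to the single swapped round, monotonicity of trace distance under the common CPTP continuation, and the triangle inequality over consecutive hybrids. Your explicit handling of Eve's quantum memory, via applying the bound to $(\mathcal{R}_r\otimes\mathrm{id})(\sigma_{r-1})$, is somewhat more careful than the paper's phrasing, which only conditions on the shared public history.
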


\begin{proof}
Define a sequence of \emph{hybrid} experiments
(a hybrid argument interpolates between two protocols one step at a time, bounding their
distance via the triangle inequality over consecutive swaps)
$H_0=\Pi, \dots, H_T=\Pi^\star$, where $H_m$ runs the ideal round for
$r\le m$ and the real round for $r>m$.  Adjacent hybrids differ only in round $m{+}1$;
conditioned on the identical public history up to round $m$, they are the real vs.\
ideal single round and therefore $\varepsilon_{m+1}$-close.  Trace distance is
non-increasing under the common CPTP continuation (remaining rounds plus Eve's
processing), so $\|H_m - H_{m+1}\|_1 \le \varepsilon_{m+1}$.  The triangle inequality
over $m=0,\dots,T{-}1$ gives $\|\Pi - \Pi^\star\|_1 \le \sum_r\varepsilon_r$.
Adaptivity is handled because the hybrid swap in round $m{+}1$ is conditioned on the
shared public history, generated identically by both $H_m$ and $H_{m+1}$ for all earlier
rounds.
\end{proof}

Together, \Cref{thm:view,lem:chain} show that $T$-round VGA leaks no more than the sequence of gradient sums it releases: the ideal multi-round view collapses exactly to the sums (\Cref{thm:view}), and real VGA is within $\sum_r\varepsilon_r$ in trace distance of that ideal (\Cref{lem:chain}), with $\varepsilon_r$ the per-round bound supplied by \Cref{prop:vga-robust}.

\section{Quantum advantage in gradient conflict detection under bandwidth constraints}\label{sec:tieaudit}

In distributed training, the ring all-reduce is not the end of the communication story. Once workers have aggregated their gradients, a server or parameter node typically distills the result into a compact signal and broadcasts it back to clients, each of which must interpret that signal against its own private model structure. The bandwidth available for this downstream broadcast can often be far smaller than what the ring itself uses: clients may be edge devices or be connected over a low-rate link. How much useful information a bandwidth-limited message can carry, and whether quantum communication offers an advantage, is the question we will now explore.

Unlike the homogeneous data-parallel setting of \Cref{sec:protocol,sec:security}, where every worker's gradient is a noisy estimate of the \emph{same} objective and averaging is precisely what one wants (it reduces variance), the conflict-detection regime arises when gradients optimise \emph{different} objectives or come from \emph{different} data distributions, so collapsing them by averaging can cancel signal rather than reduce noise, which is what makes detecting conflict, rather than blindly averaging, worthwhile. In such large-scale distributed training with heterogeneous or adversarial data distributions, \emph{gradient conflicts} are a central obstacle \cite{yu2020gradient,karimireddy2020scaffold}: two workers' gradient vectors may point in nearly opposite directions, and naively averaging them cancels signal or slows convergence. Detecting such conflicts after the all-reduce allows practitioners to selectively re-weight or drop conflicting updates, an idea that underlies conflict-aware federated learning algorithms and untying/refactoring schedules in over-parameterised networks.

The baseline for conflict detection is to broadcast all gradients: $O(Pb)$ bits per worker. The question is how much cheaper this can be made with quantum resources. This section characterises the quantum-classical separation for two natural variants of the problem, which arise from two different ways of formalising ``gradient conflict'':

\begin{itemize}
\item \emph{Margin conflict detection (\textsc{GapIP}$_\tau$):} two workers, each holding a unit-normalised gradient, test whether the signed inner product of their gradients falls above or below a threshold $\tau$. The quantum advantage is \textbf{quadratic}, tight on both sides.
\item \emph{Relational sign audit (\textsc{TieAudit}$_\epsilon$):} a server holds the global sign-gradient produced by signSGD; each client holds a private matching of parameter pairs (tied weights) and wants the fraction of its ties on which the global update is sign-inconsistent. The quantum advantage is \textbf{exponential} in $\log P$, arising from the hidden-matching structure of the relational task.
\end{itemize}

\subsection{Problem definitions}

Throughout this section, $P\in\mathbb{Z}_{>0}$ is the number of model parameters, $\tau\in(0,1)$ a margin threshold, and $\epsilon\in(0,\tfrac{1}{2})$ an additive estimation error. Communication complexity is one-way (client to server) unless stated otherwise.
We work with symmetric resources: both quantum and classical protocols may use
input-independent pre-shared correlations (EPR pairs or shared randomness respectively).
Readers new to one-way communication complexity and shared-resource models may consult
\Cref{app:cc} for a self-contained primer; \Cref{app:bhm} covers the Boolean Hidden
Matching problem on which \Cref{thm:tie} rests.

\begin{definition}[Margin conflict detection: $\textsc{GapIP}_\tau$]\label{def:gapip}
Two workers hold unit-norm gradients $g_j,g_k\in\mathbb{R}^P$, $\|g_j\|=\|g_k\|=1$.
Given the promise that $|\langle g_j,g_k\rangle|\ge\tau$, decide whether
$\langle g_j,g_k\rangle\ge\tau$ (\emph{aligned}) or $\langle g_j,g_k\rangle\le-\tau$
(\emph{conflicted}), with error probability at most $\tfrac{1}{3}$.
Communication is from one worker to the other (one-way) or interactive.
\end{definition}

\begin{definition}[Relational sign statistics: $\textsc{TieAudit}_\epsilon$]\label{def:tieaudit}
A server holds the global sign-gradient $s\in\{\pm1\}^P$, where $s_c=\mathrm{sign}([\bar{g}]_c)$
is the coordinate-wise sign of the all-reduce output from signSGD~\cite{bernstein2019signsgd}.
Each client $k$ holds a private perfect matching $M_k$ on the coordinate set $[P]=\{1,\dots,P\}$:
a set of $P/2$ disjoint index pairs $(i,j)$ induced by parameter ties (e.g. shared weights in a
convolutional layer, factor pairs in a matrix factorisation, or quantisation bins).
A pair $(i,j)\in M_k$ is \emph{sign-inconsistent} if $s_i\ne s_j$: the global update
disagrees on the tied coordinates.
The task is to estimate the fraction of conflicted pairs
\[
  f_k \;=\; \frac{|\{(i,j)\in M_k : s_i\ne s_j\}|}{|M_k|}
\]
to additive $\pm\epsilon$, given a one-way message from server to client.
This is an instance of the \emph{Boolean Hidden Matching}
problem~\cite{gavinsky2008exponential}: the server holds a bit string $x\in\{0,1\}^P$,
the receiver holds a perfect matching $M$ on $[P]$ and a label vector $w\in\{0,1\}^{P/2}$
with the promise that the edge-parity vector $(x_i\oplus x_j)_{(i,j)\in M}$ equals $w$ or
$\bar{w}$; the task is to decide which.
Setting $x_c=\mathbf{1}[s_c=+1]$ identifies the two problems. 
A self-contained account of Boolean Hidden Matching is given in \Cref{app:bhm}.
\end{definition}

\subsection{\textsc{GapIP}}

The \textsc{GapIP} problem can be described as deciding in which
side of $\pi/2$ the angle between $g_j$ and $g_k$ falls on (to within margin $\tau$). 
These types of angle decisions are precisely the use-case for amplitude estimation: namely, encoding each gradient into a quantum state, applying a Grover reflection product, and reading off the resulting angle without reconstructing the vectors.
Resolving an angle to precision $\Theta(\tau)$ takes $O(1/\tau)$ reflections,
independent of $P$; only the cost of a single round-trip carries the $\log P$
dependence.

Classically the angle is also defined, but there is no classical mechanism
analogous to coherent phase estimation for resolving it: any estimator built from
samples or random projections is bound by the standard $1/\sqrt{n}$ concentration
rate, so precision $\tau$ forces $n=\Theta(\tau^{-2})$ samples, by
Johnson--Lindenstrauss~\cite{dasgupta2003elementary}. This is the same shot-noise-versus-Heisenberg
contrast as~\Cref{thm:qfi}: amplitude estimation reaches $\tau$ precision in
$O(1/\tau)$ rounds where classical sampling needs $O(1/\tau^2)$.

We formalize this separation below.

\begin{theorem}[Gap Inner Product separation]\label{thm:margin}
With symmetric resources:
\begin{enumerate}
\item \emph{Quantum upper bound:}
  $\textsc{GapIP}_\tau$ is solvable with $\widetilde{O}(\tau^{-1}\log P)$ qubits
  of interactive communication. See \ref{app:gapip} for the protocol implementation.
\item \emph{Classical bounds:}
  The randomised communication complexity satisfies \\$R(\textsc{GapIP}_\tau)=\widetilde\Theta(\min(\tau^{-2},P))$ bits.
\item \emph{Quantum lower bound:}
  $Q(\textsc{GapIP}_\tau)=\widetilde\Omega(\tau^{-1})$ qubits. The separation
  is quadratic and the quantum protocol of~(1) is optimal up to log factors.
\end{enumerate}
\end{theorem}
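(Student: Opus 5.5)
The proof splits into the three claims of \Cref{thm:margin}, each reduced to a known primitive.

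\emph{Quantum upper bound.} Let $\ket{g_j}=\sum_c [g_j]_c\ket{c}$ be the amplitude encoding on $\lceil\log P\rceil$ qubits, after discretising each coordinate to $O(\log(P/\tau))$ bits. The distortion this causes is absorbed into the margin. We want the sign of $\langle g_j,g_k\rangle$, not just $|\langle g_j,g_k\rangle|$, so we use a controlled ancilla. Worker $j$ prepares $\tfrac1{\sqrt2}(\ket0\ket{g_j}+\ket1\ket{0})$. Worker $k$ applies, controlled on the ancilla, the unitary $V_k^\dagger$ with $V_k\ket0=\ket{g_k}$. The amplitude of $\ket{+}\ket{0}$ then encodes $\tfrac12(1+\langle g_j,g_k\rangle)$, which depends on the sign. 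We run amplitude estimation on this Hadamard-test state. Each Grover iterate is a product of two reflections: one about $\ket{g_j}$, applied by worker $j$, and one about the good subspace, applied by worker $k$. So each iteration costs one round trip of $O(\log P)$ qubits. Resolving the angle to precision $\Theta(\tau)$ takes $O(\tau^{-1})$ iterations. Median amplification to error $1/3$ adds only constant or logarithmic factors. The total is $\widetilde O(\tau^{-1}\log P)$ qubits.

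\emph{Classical bounds.} For the upper bound we take the minimum of two protocols. The first is exact transmission of a $(\tau/4)$-discretised vector, costing $\widetilde O(P)$ bits. The second uses shared randomness: both workers take $n=O(\tau^{-2})$ common random sign vectors $r_\ell$, and worker $j$ sends the signs $\mathrm{sign}\langle r_\ell,g_j\rangle$. Worker $k$ compares these with its own signs. By the SimHash/Goemans--Williamson identity the agreement rate is $1-\arccos(\langle g_j,g_k\rangle)/\pi$, which differs by $\Omega(\tau)$ between the two promise cases. A Chernoff bound then gives error $1/3$ with $n=O(\tau^{-2})$. For the lower bound we reduce Gap-Hamming Distance on $n=\min(\tau^{-2},P)$ bits to \textsc{GapIP}. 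Map $x\mapsto g_j=\tfrac1{\sqrt n}((-1)^{x_i})_i$, padded with zeros to length $P$. Then $\langle g_j,g_k\rangle=1-2d_H(x,y)/n$, and the Gap-Hamming gap of $\sqrt n$ becomes a gap of $\pm\tau$ after shifting. One subtlety is that Gap-Hamming is centred at $n/2$ with a $\sqrt n$ gap, which already matches a threshold at inner product $0$ with $\tau=1/\sqrt n$. The Chakrabarti--Regev bound then gives $\Omega(n)$ bits for interactive randomised protocols.

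\emph{Quantum lower bound.} We use the same embedding. Gap-Hamming Distance has quantum communication complexity $\Omega(\sqrt n)$, by Razborov's symmetric-predicate method, since the predicate $D(|x\wedge y|)$ is not constant near $n/2$ over a window of width $\sqrt n$. With $n=\tau^{-2}$ this gives $\widetilde\Omega(\tau^{-1})$, matching part (1) up to log factors.

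The main obstacle is part (1). Amplitude estimation needs coherent access to both reflections, but the two reflections sit at different parties. Each iteration therefore has to be charged as a round trip of the full register. We must also check that the ancilla-controlled state preparation stays implementable in a distributed way, with worker $j$'s $V_j$ used locally only. Sign sensitivity is handled through the Hadamard-test amplitude rather than $|\langle g_j,g_k\rangle|^2$. A second point to verify is that the Razborov window condition really applies at threshold $n/2$ with gap $\sqrt n$, giving $\Omega(\sqrt n)$ rather than something weaker.
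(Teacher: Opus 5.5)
\textbf{Gap in part (3).} The step that fails as written is the quantum lower bound. Razborov's theorem is about \emph{total} predicates $D(|x\wedge y|)$ and has no ``window of width $\sqrt n$'' condition. $\mathrm{GHD}_n$ is a promise (partial) function, so the theorem does not apply to it directly.

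Applying it literally to a total predicate on $n$ bits with its jump at $\Theta(n)$ gives $\Omega(\sqrt{n\cdot\Theta(n)})=\Omega(n)$. That is false for Gap-Hamming. Your own part-(1) protocol, composed with the embedding $x\mapsto(-1)^{x}/\sqrt n$ (so $\tau=2/\sqrt n$, $P=n$), solves $\mathrm{GHD}_n$ with $\widetilde O(\sqrt n)$ qubits.

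The missing idea is the blow-up reduction the paper invokes.
\begin{enumerate}
\item Start from a threshold predicate $D_t(\ell)=\mathbf{1}[\ell\ge t]$ on $m$ bits with $t=\Theta(m)$, $t\le m/2$. For this predicate Razborov gives $Q=\Omega(m)$.
\item Encode each coordinate by the gadget $x_i\mapsto(x_i,\bar x_i,0)$, $y_i\mapsto(y_i,0,\bar y_i)$. Its Hamming distance is $2-2x_iy_i$, so $d_H=2m-2|x\wedge y|$ with no weight restriction.
\item Repeat every coordinate $k=\Theta(m)$ times and pad with fixed coordinates. The unit jump of $D_t$ becomes a Hamming gap of $\pm k$ around the centre of an instance of length $N=\Theta(m^2)$.
\item With the constants chosen so that $k\ge\sqrt N$, every input maps to a valid $\mathrm{GHD}_N$ instance. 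Hence $Q(\mathrm{GHD}_N)=\Omega(m)=\Omega(\sqrt N)$.
\end{enumerate}
The jump must sit at $\Theta(m)$. Blowing up plain disjointness the same way yields only $\Omega(\sqrt m)=\Omega(N^{1/4})$.

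\textbf{The rest.} Everything else matches the paper up to packaging, apart from three small repairs.

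In part (1), your Hadamard-test state $\tfrac{1}{\sqrt2}(\ket{0}\ket{g_j}+\ket{1}\ket{0})$ is the paper's augmented vector $\hat u=(1,g_j)/\sqrt2$ with an ancilla label. Your amplitude-estimation iterate is the paper's $G=R_{\hat u}R_{\hat v}$. Note that worker $j$ must reflect about this whole augmented state, not about $\ket{g_j}$ on the data register alone.

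In the classical lower bound, your choice $n=\min(\tau^{-2},P)$ is actually more careful than the paper's $n=4/\tau^2$. The paper's choice needs $P\ge 4\tau^{-2}$ to embed, so by itself it does not give the stated $\Omega(\min(\tau^{-2},P))$.

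Your classical upper bound is a genuinely different route from the paper's Gaussian JL sketch. A sign sketch sends $O(\tau^{-2})$ bits outright, with no quantisation of real-valued sketch entries (the source of the paper's $\widetilde O$). However, the identity $\Pr[\text{agree}]=1-\arccos(\langle g_j,g_k\rangle)/\pi$ needs rotation-invariant, e.g.\ Gaussian, $r_\ell$, which shared randomness can supply. With $\pm1$ vectors it fails. Take $g_j=e_1$ and $g_k=\pm(\tau e_1+\sqrt{1-\tau^2}\,e_2)$ with $\tau<1/\sqrt2$. Then $\mathrm{sign}\langle r,g_k\rangle=\pm\mathrm{sign}(r_2)$, independent of $r_1$, so both branches agree with probability exactly $\tfrac12$.

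Finally, the trivial protocol must make the $\ell_2$ error, not the per-coordinate error, at most $\tau/4$. That means per-coordinate precision $\tau/(4\sqrt P)$, which costs $O(P\log(P/\tau))=\widetilde O(P)$ bits.
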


\begin{proof}
\begin{enumerate}
    \item \emph{Quantum upper bound:}
Augment each unit vector to remove the sign ambiguity:
set $\hat{u}=(1,g_j)/\sqrt{2}$ and $\hat{v}=(1,g_k)/\sqrt{2}$ so that
$\langle\hat{u},\hat{v}\rangle = \frac{1+\langle g_j,g_k\rangle}{2}\in[0,1]$
is a monotone re-scaling of the signed inner product.
The two-party Grover operator
$G\;=\;(2\ket{\hat u}\bra{\hat u}-I)(2\ket{\hat v}\bra{\hat v}-I)$
rotates by angle $\theta=\arccos(\langle\hat{u},\hat{v}\rangle)$;
recovering $\langle g_j,g_k\rangle$ to precision $\pm\tau$ is equivalent to estimating
$\theta$ to precision $\Theta(\tau)$.
Kitaev phase estimation on $G$~\cite{bhmt} achieves this with $O(1/\tau)$
applications of $G$; each application requires one round-trip of $O(\log P)$ qubits.
Total: $\widetilde{O}(\tau^{-1}\log P)$ qubits.

    \item \emph{Classical bounds:}
\begin{itemize}
    \item \emph{Upper:} draw a random matrix $R\in\mathbb{R}^{k\times P}$ with i.i.d.\
$\mathcal{N}(0,1/k)$ entries (shared randomness so all parties hold the same $R$).
One worker broadcasts $Rg_j$ (a $k$-dimensional sketch).
By the distributional Johnson-Lindenstrauss lemma, for $k=O(\tau^{-2}\log(1/\delta))$,
$\langle Rg_j,Rg_k\rangle$ approximates $\langle g_j,g_k\rangle$ to $\pm\tau/2$
with probability $1-\delta$. Total: $O(\tau^{-2})$ words, i.e.\ $\widetilde{O}(\tau^{-2})$ bits.
The trivial upper bound of $O(P)$ bits (send the whole vector) gives
$R(\textsc{GapIP}_\tau)=\widetilde{O}(\min(\tau^{-2},P))$.
    \item \emph{Lower:} Gap-Hamming Distance $(\mathrm{GHD}_n)$ on $n$ bits reduces to
$\textsc{GapIP}_\tau$ with $n=4/\tau^2$: encode Alice's string as
$u=(-1)^x/\sqrt{n}\in\mathbb{R}^n$, so that gap $\tau$ in inner product corresponds
exactly to the Hamming gap in GHD. By~\cite{chakrabartiregev},
$R(\mathrm{GHD}_n)=\Omega(n)$, giving $R(\textsc{GapIP}_\tau)=\Omega(\tau^{-2})$.
\end{itemize}
    \item \emph{Quantum lower bound:}
The same embedding into $\mathrm{GHD}_{n}$ with $n=4/\tau^2$ gives
$Q(\textsc{GapIP}_\tau)\ge Q(\mathrm{GHD}_{4/\tau^2})$.
The quantum lower bound $Q(\mathrm{GHD}_n)=\Omega(\sqrt{n})$ follows from a blow-up
reduction from set-intersection~\cite{razborov} (repeat each coordinate $\sqrt{n}$ times;
Razborov's symmetric-predicate bound applies at the threshold jump).
Hence $Q(\textsc{GapIP}_\tau)\ge\Omega(\sqrt{4/\tau^2})=\Omega(1/\tau)$,
matching the upper bound of~(1) to log factors.
\end{enumerate}

\end{proof}

\begin{remark}[Scope of the quadratic advantage]\label{rem:margin}
The quantum protocol wins when $P>1/\tau$: otherwise the classical $O(P)$-bit upper bound
matches the quantum $O(1/\tau)$ bound and there is no separation.
For $N$ workers, classical conflict detection broadcasts one sketch per worker
($\widetilde{O}(N/\tau^{2})$ total), while the quantum protocol is inherently pairwise
($\widetilde{O}(N^2/\tau)$ total rounds); quantum wins in total communication only when
$N\lesssim1/\tau$.
Finally, the Heisenberg scaling in~(1) requires interaction: any one-round protocol
estimating a state overlap to $\pm\tau$ is shot-noise limited at
$\widetilde\Omega(\tau^{-2})$ copies~\cite{bhmt}, so the quadratic communication
advantage and $O(1)$-round operation are mutually exclusive.
\end{remark}

\subsection{$\textsc{TieAudit}_\epsilon$}

The margin result shows quantum is faster for \emph{numerical} conflict scores, by a quadratic factor. The exponential separation below does not come from precision, but the relational nature of the task. The main distinction is between a \emph{value} and a \emph{relation}. A value task asks for a number (e.g. $\langle g_j,g_k\rangle$) that tolerates additive error and can be compressed by sketching or sampling. A relational task asks for one valid fact drawn from a family that the \emph{receiver} privately chooses; no fixed short message can serve all possible matchings, because the receiver's choice is made \emph{after} the sender commits. This is why $\textsc{TieAudit}$ resists classical compression while admitting a small quantum message: the sender's phase state encodes all $P$ signs simultaneously in its amplitudes, and \emph{any} matching the receiver holds can extract a pair's parity in a single measurement, because the wrong-parity amplitude is exactly zero.

\begin{theorem}[Tie-audit separation]\label{thm:tie}
$\textsc{TieAudit}_\epsilon$ has one-way communication complexity
\[
  \text{Classical: }\;\Omega(\sqrt{P}\,), \qquad \text{Quantum: }\;O(\epsilon^{-2}\log P).
\]
\end{theorem}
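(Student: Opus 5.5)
Both bounds go through Boolean Hidden Matching, so the argument has two halves: a quantum upper bound from the standard Gavinsky et al. hidden-matching protocol, repeated and averaged, and a classical lower bound by reduction from the BHM one-way lower bound of \cite{gavinsky2008exponential}.

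\emph{Quantum upper bound.} The server encodes $s$ as the $\lceil\log_2 P\rceil$-qubit phase state $\ket{\psi_s}=\tfrac{1}{\sqrt P}\sum_{c=1}^P s_c\ket{c}$ and sends it; this costs $O(\log P)$ qubits. The client measures in the basis $\{\tfrac{1}{\sqrt2}(\ket{i}\pm\ket{j})\}_{(i,j)\in M_k}$ and obtains a uniformly random pair $(i,j)\in M_k$ together with the sign.

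On the pair $(i,j)$ the state is proportional to $s_i\ket{i}+s_j\ket{j}$. The ``$-$'' outcome therefore has amplitude exactly zero when $s_i=s_j$, and the ``$+$'' outcome has amplitude exactly zero when $s_i\neq s_j$. Each copy thus yields a Bernoulli sample equal to $\mathbf 1[s_i\ne s_j]$ for a uniformly random edge of $M_k$. Its mean is exactly $f_k$, because the edges of a perfect matching are equiprobable: each has probability $2/P$.

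The server sends $m=O(\epsilon^{-2})$ independent copies, and the client outputs the empirical frequency. Hoeffding's inequality gives additive error $\epsilon$ with probability at least $2/3$, for a total of $O(\epsilon^{-2}\log P)$ qubits. No entanglement or shared randomness is needed, so the symmetric-resource convention is satisfied trivially.

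\emph{Classical lower bound.} I would reduce the BHM problem on $P$ bits to $\textsc{TieAudit}_\epsilon$ for any $\epsilon<\tfrac12$. Alice holds $x\in\{0,1\}^P$, and Bob holds $(M,w)$ with the promise that the edge-parity vector $z=(x_i\oplus x_j)_{(i,j)\in M}$ equals $w$ or $\bar w$. Alice sets $s_c=(-1)^{x_c}$, which makes $f_M=|z|/(P/2)$, the normalised Hamming weight of $z$.

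The difficulty is that $|z|$ alone does not separate $z=w$ from $z=\bar w$ when $|w|\approx P/4$. In that case the two hypotheses give $f=|w|/(P/2)$ and $1-|w|/(P/2)$, which are nearly equal. I expect this to be the main obstacle: a raw estimate of $f$ only distinguishes the hypotheses when $\bigl||w|-P/4\bigr|\gtrsim\epsilon P$.

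I would handle it by having Bob fold $w$ into the matching. The plan is to use the balanced (Gavinsky et al.) form of BHM, in which the hardness already holds for a distribution where $w$ is publicly fixed, or to use shared randomness to XOR a public mask into $x$. The most robust route is to invoke the hard distribution of \cite{gavinsky2008exponential} directly. There, Bob's input reduces to deciding whether $z$ is $0^{P/2}$-like or $1^{P/2}$-like after relabelling; concretely, Alice can be given a shifted string $x\oplus y$ for a public $y$ so that $w$ becomes $0$. Under this reduction $f_M\in\{0,1\}$, and any $\epsilon<\tfrac12$ estimate decides BHM. A classical one-way protocol for $\textsc{TieAudit}_\epsilon$ with $c$ bits then solves this BHM variant, whose one-way randomised complexity is $\Omega(\sqrt P)$, giving $c=\Omega(\sqrt P)$ independent of $\epsilon$. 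If the relabelling step fails for the version of the lower bound I can cite, the fallback is to restrict to the sub-distribution of BHM instances with $\bigl||w|-P/4\bigr|=\Omega(P)$. I would then need to check that the $\Omega(\sqrt P)$ bound, which comes from Fourier analysis over $x$ and the matching, survives this conditioning; this is the step I would verify most carefully.
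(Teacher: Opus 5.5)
Your quantum upper bound is correct and is exactly the paper's argument. On the classical side your diagnosis is right, and sharper than the paper's. The paper simply asserts that $\mathrm{wt}(w)/(P/2)$ and $1-\mathrm{wt}(w)/(P/2)$ differ by at least $\tfrac12$ ``for balanced $w$''. That holds only when $\mathrm{wt}(w)\le P/8$ or $\mathrm{wt}(w)\ge 3P/8$, so it fails precisely for balanced $w$. In the Gavinsky et al.\ hard distribution ($x$ and $M$ uniform, $w=Mx\oplus b^{P/2}$ for a uniform bit $b$) the label $w$ is uniform, so $\mathrm{wt}(w)=P/4\pm O(\sqrt P)$ with overwhelming probability. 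A direct citation of the BHM bound therefore does not give the theorem.

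The gap is in your proposed repair. A shared mask $y$ cannot make $w$ vanish. Bob can compute $My$, so the relabelled instance has label $w\oplus My$, which is $0$ only if $My=w$, and $y$ is independent of Bob's private $(M,w)$. Nor is $w$ ``publicly fixed'' in the Gavinsky et al.\ hard distribution; it is correlated with $x$. Re-pairing indices within $[P]$ does not help either, because the new parities involve bits of $x$ that Bob does not know. Your fallback conditions the hard distribution on an event of probability $2^{-\Omega(P)}$, so the Fourier bound does not transfer without a new argument.

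The missing step is a doubling reduction, which is the concrete form of ``folding $w$ into the matching''. The server uses $x'=(x,\bar x)\in\{0,1\}^{2P}$, a function of $x$ alone. Bob replaces each edge $e=(i,j)\in M$ by the two edges $(i,j),(P+i,P+j)$ if $w_e=0$, and by $(i,P+j),(P+i,j)$ if $w_e=1$. The result $M'$ is a perfect matching on $[2P]$, and every edge of $M'$ has parity $(Mx)_e\oplus w_e=b$, so $f_{M'}\in\{0,1\}$. Hence any one-way $\textsc{TieAudit}_\epsilon$ protocol with $\epsilon<\tfrac12$ on $2P$ coordinates solves BHM on $P$ coordinates with the same message, and the $\Omega(\sqrt P)$ bound follows.
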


\begin{proof}
\begin{enumerate}
        \item \emph{Quantum upper bound:}
The server prepares and sends $O(\epsilon^{-2})$ independent copies of the $\lceil\log_2 P\rceil$-qubit
phase state
\[
  \ket{s} \;=\; \frac{1}{\sqrt{P}}\sum_{c=1}^{P} s_c\ket{c},
\]
a coherent superposition over parameter indices with sign-gradient amplitudes.
A client holding matching $M_k$ measures each copy in the \emph{matching basis}
$\big\{(\ket{i}+\ket{j})/\sqrt{2},\;(\ket{i}-\ket{j})/\sqrt{2}\big\}_{(i,j)\in M_k}$.
This measurement collapses to a uniformly random pair $(i,j)\in M_k$ with outcome
$+$ if $s_i=s_j$ and $-$ if $s_i\ne s_j$: the wrong-parity amplitude is exactly zero,
because $s_i\ne s_j$ implies $\bra{i}+\bra{j}$ and $s_i\ket{i}+s_j\ket{j}$ are orthogonal.
Averaging $O(\epsilon^{-2})$ outcome bits estimates $f_k$ to additive $\pm\epsilon$ by
Chernoff bounds. Total: $O(\epsilon^{-2}\log P)$ qubits.

        \item \emph{Classical lower bound:}
Any estimator of $f_k$ to additive $\pm\tfrac{1}{4}$ distinguishes the two promise branches
of Boolean Hidden Matching: the fractions $\mathrm{wt}(w)/(P/2)$ and
$1-\mathrm{wt}(w)/(P/2)$ differ by at least $\tfrac{1}{2}$ for balanced $w$,
so an additive-$\tfrac{1}{4}$ estimate identifies the branch.
By the classical lower bound for Boolean Hidden Matching~\cite{gavinsky2008exponential},
any classical one-way protocol for this task requires $\Omega(\sqrt{P})$ bits.
\end{enumerate}
\end{proof}

It is important to note that the exponential separation concerns communication required to estimate a relational statistic of the aggregated gradient; \textbf{it does not} by itself \textbf{imply} an exponential \textbf{improvement in training performance}.

\begin{remark}[Why relational tasks resist compression]
No classical sketch of $o(\sqrt{P})$ bits can serve all possible client matchings, because
the matching $M_k$ is chosen after the server's message is fixed.
A Johnson-Lindenstrauss sketch\footnote{A Johnson-Lindenstrauss (JL) sketch compresses a vector $s\in\mathbb{R}^P$ to $k=O(\varepsilon^{-2}\log(1/\delta))$ dimensions via a random
linear map $A\in\mathbb{R}^{k\times P}$, with entries drawn i.i.d.\ $\pm1/\sqrt{k}$ or Gaussian.  The JL lemma guarantees $|\langle As,Au\rangle-\langle s,u\rangle|\leq\varepsilon$ for any \emph{fixed} query vector $u$ with probability $1-\delta$.  The sketch is therefore
useful when the query direction is known at encoding time, but cannot answer queries whose structure (here, the client matching $M_k$) is revealed only after the message is sent.}preserves inner products and norms but not the parity of a specific coordinate pair under an unanticipated matching. The quantum state $\ket{s}$ succeeds precisely because it encodes all $P$ signs in its amplitude structure and lets the receiver's choice of measurement basis post-select onto the relevant pair without any prior knowledge of the matching.
\end{remark}

\begin{remark}[Practical scope]\label{rem:tieaudit}
\Cref{thm:tie} establishes the first exponential one-way quantum advantage for a statistic
of the aggregated gradient in server-to-client gradient communication.
Two things to consider alongside the result.
\begin{enumerate}
    \item Training benefit is open: the natural use is an adaptive untying schedule
that breaks ties whose conflict fraction remains high across rounds (e.g.\ splitting a
shared weight when the global update consistently disagrees with the tie).
Whether consuming the \textsc{TieAudit} statistic improves convergence is plausible but
must be demonstrated experimentally.
    \item Quantum states cannot be broadcast, so serving $N$ clients requires $N$
separate copies of $\ket{s}$: total cost $\Theta(N\epsilon^{-2}\log P)$ qubits versus one
classical $O(\sqrt{P})$-bit broadcast.
The per-client advantage is exponential; the system-wide quantum advantage holds only
when $N\ll\sqrt{P}/(\epsilon^{-2}\log P)$.
\end{enumerate}
\end{remark}

\section{Conclusion and outlook}\label{sec:conclusion}

We introduced a quantum version of ring all-reduce and characterised the advantages that quantum communication can provide for distributed learning. Our results carry several qualifications. Our first result is a bandwidth-optimal quantum ring all-reduce primitive that uses pre-shared entanglement and superdense coding to reduce per-link communication by exactly a factor of two while preserving the classical round structure. We proved that this factor is optimal, and no entanglement-assisted protocol can reduce the communication below the entanglement-assisted Holevo limit. In the second result, the bandwidth analysis assumes noiseless quantum channels and perfect EPR pair generation; realistic hardware introduces fidelity loss that erodes the factor-of-two gain and requires a separate noise analysis. The VGA security proof is asymptotic: as noted in \Cref{sec:security}, finite-key constants and a noise-tolerant verification threshold remain to be established.
The gradient conflict detection results are worst-case communication complexity separations; whether the quantum advantage materialises in practice depends on the overhead of preparing and measuring the required states, which remains non-trivial on current hardware. Finally, quantum states cannot be broadcast: serving many clients requires independent copies of the quantum message, limiting system-wide advantage to the regime identified in \Cref{rem:tieaudit}.

Several avenues remain open for future work:
\begin{itemize}
    \item On the learning side, it remains to determine which gradient statistics beyond those studied here admit substantial quantum communication advantages, and whether acting on \textsc{TieAudit} output to drive adaptive model corrections yields measurable convergence improvements. 
    \item Regarding physical realism, every advantage we establish here assumes high-fidelity entanglement and noiseless channels, so real hardware will require entanglement distillation, quantum repeater networks to distribute entanglement across the distances separating workers, and ultimately fault-tolerant entanglement distribution.
Part of this is already built into VGA: its verification is a fidelity test, so noise tolerance amounts to setting its acceptance threshold above the hardware noise floor rather than demanding perfect agreement.
    \item Topology is a another such question: we use the ring because it is the bandwidth-optimal collective for data-parallel training, but the GHZ phase-encoding formulation of
\Cref{subsec:GHZ} is largely topology-agnostic, as the aggregate $\sum_j\theta_j$ is a global property of the shared GHZ state rather than of any particular interconnect, and
it is natural to ask whether the same advantages extend to tree all-reduce, parameter-server architectures, and the hierarchical collectives that combine intra- and
inter-node reduction. The per-link superdense-coding saving transfers to any topology directly, whereas the security construction depends on how the GHZ state is distributed and verified. 
\end{itemize}
Settling these questions will require characterising the end-to-end overhead of the protocol in a realistic network architecture, including entanglement generation rates, qubit coherence requirements, and integration with collective communication libraries such as NCCL and Horovod, which will ultimately decide whether the bandwidth saving and the privacy guarantee remain net-positive once the cost of producing the states they consume is counted.

More broadly, our results suggest a hierarchy of quantum advantages for distributed learning. When the task is to communicate the value of a global aggregate, quantum communication is constrained by information-theoretic limits and yields at most constant-factor improvements. When the task is to certify the process by which the aggregate was generated, quantum resources enable privacy guarantees unavailable to classical protocols under the same assumptions. When the task is to evaluate relational properties of the aggregate relative to private local structure, exponential communication advantages become possible. Understanding which learning primitives fall into each of these regimes is, in our view, a promising direction for future work.

\section{Acknowledgments} This research is partially supported by the National Research Foundation, Singapore through the National Quantum Office, hosted in A*STAR, under its Centre for Quantum Technologies Funding Initiative (S24Q2d0009). MGG is funded by the EPSRC UK Quantum Technologies Programme under grant EP/T001062/1
and VeriQloud. The authors extend their gratitude to the Centre for Quantum Technologies in Singapore for hosting Hackamonth 2026, where this project was developed.

\bibliographystyle{plain}
\bibliography{references}

\section*{Appendix}

\appendix

\section{Superdense coding primer}\label{app:sdc}

Superdense coding~\cite{bw92} is the protocol underlying the factor-of-two bandwidth saving of \Cref{sec:protocol}. It is the mirror image of teleportation: teleportation spends one shared EPR pair and two classical bits to move one qubit, whereas superdense coding spends one shared EPR pair and one transmitted qubit to move two classical bits. We restate it here for readers coming from the distributed-learning side, in the same accessible spirit as the Boolean Hidden Matching primer of \Cref{app:bhm}.

\subsection{The protocol}
Alice and Bob share one EPR pair in the state
\[
  \ket{\Phi^+}=\frac{1}{\sqrt2}\bigl(\ket{00}+\ket{11}\bigr),
\]
with Alice holding the first qubit and Bob the second. Crucially, this pair is distributed in advance, during the input-independent setup phase, before Alice knows which bits she will send. To transmit two classical bits $(a,b)\in\{0,1\}^2$, Alice applies one of four Pauli operations to her half \emph{alone} and then sends that single qubit to Bob. The four operations rotate the shared pair onto the four mutually orthogonal Bell states:
\[
  \begin{array}{c|c|c}
    (a,b) & \text{Alice applies} & \text{resulting Bell state}\\\hline
    00 & I  & \ket{\Phi^+}=(\ket{00}+\ket{11})/\sqrt2\\
    01 & X  & \ket{\Psi^+}=(\ket{01}+\ket{10})/\sqrt2\\
    10 & Z  & \ket{\Phi^-}=(\ket{00}-\ket{11})/\sqrt2\\
    11 & ZX & \ket{\Psi^-}=(\ket{01}-\ket{10})/\sqrt2
  \end{array}
\]
Now holding both qubits, Bob performs a Bell measurement. Because the four states are mutually orthogonal, this measurement distinguishes them perfectly, and Bob recovers $(a,b)$ exactly in the noiseless setting.

\subsection{Why this gives the factor of two}
Only one qubit crosses the channel during the online phase, yet two classical bits are delivered: the entanglement-assisted capacity of a noiseless qubit channel is two classical bits, and superdense coding saturates it~\cite{bsst}. This is exactly the substitution used in \Cref{sec:protocol}, where each classical message of $m$ bits becomes $\lceil m/2\rceil$ transmitted qubits on every ring link.

The saving is not a free lunch in total channel use. Each pair consumed online had to be distributed during setup, at a cost of one qubit channel-use across the same link, so the lifetime carrier count matches the classical baseline, consistent with the Holevo bound. What superdense coding changes is the \emph{timing}: the setup traffic is input-independent and can be scheduled off the synchronisation critical path, leaving only half the volume inside the blocking all-reduce barrier. In the communication-bound regime of large-scale training, that relocation is the operationally relevant gain. 

Classical systems can also exploit input-independent preprocessing, but pre-shared classical resources do not provide an analogue of superdense coding. The factor-of-two reduction in online communication arises specifically from entanglement-assisted communication.

\section{Communication-complexity primer}\label{app:cc}

Communication complexity studies how much two parties must exchange in order to jointly
compute a function of their separate inputs. It is important because in distributed
systems the dominant cost is frequently not local computation but the data moved between
machines, and communication complexity makes that cost precise: it yields lower bounds
that hold no matter how clever the local processing is. The point is sharp in large-scale
machine learning, where training is communication-bound and the exchange of gradients
across devices, rather than the arithmetic on any one of them, sets the pace. A lower
bound on the bits needed for a coordination task is then a statement that no
implementation, present or future, can do better, and a quantum protocol that beats it is
exploiting a genuinely non-classical effect. This is the setting for the separations of
\Cref{sec:tieaudit}, and we collect here the notions our results use.

\subsection{The model}
Two parties, conventionally Alice and Bob, hold inputs $x$ and $y$ and wish to evaluate a function $f(x,y)$. Local computation is free; the only charged resource is the number of bits, or qubits, that they send to one another. The communication complexity of
$f$ is the least communication, over all protocols, that outputs $f(x,y)$ correctly. In our setting the inputs are gradient-derived. In \textsc{GapIP} (\Cref{def:gapip}) Alice holds $g_j$, Bob holds $g_k$, and they decide whether the two gradients are aligned or
conflicting; the point is that they do so \emph{without} exchanging the gradients themselves, since the naive solution sends a full $P$-dimensional vector ($Pb$ bits) and \textsc{GapIP} asks how much smaller a message can be while still resolving the alignment
question, which is the regime where a client must act on a gradient it cannot afford to receive in full. In \textsc{TieAudit} (\Cref{def:tieaudit}) the server holds the sign-gradient $s$ and the client holds a private matching. A \emph{matching} in this setting is a pairing of the $P$ coordinate indices into $P/2$ disjoint pairs $(i,j)$, each pair marking two parameters the client's own model treats as coupled, for instance weights shared across a convolutional filter, or two entries forced equal by a factorisation or a quantisation bin. It is \emph{private} because it is fixed by the client's local structure and is never sent to the server, so the server's message must be useful for whatever pairing the client happens to hold.

\subsection{One-way versus interactive}
A protocol is \emph{one-way} when it consists of a single message from one party to the other, who then produces the answer, and \emph{interactive} when the parties exchange several messages, each able to depend on what was received so far. The distinction is not
cosmetic. \textsc{TieAudit} is one-way: the server sends a message and the client decides, with no return channel, mirroring a server that broadcasts to bandwidth-limited clients. \textsc{GapIP} is solved interactively: the phase-estimation protocol of \Cref{app:gapip} shuttles a register back and forth, and \Cref{rem:margin} shows that the same precision is
unattainable in a single round.

\subsection{Randomness and shared resources}
If the parties must always be correct the problem is \emph{deterministic}; if they may err with probability at most $1/3$, a threshold that repetition can drive down, it is
\emph{randomised}. A randomised protocol may use \emph{shared randomness}: a common random string, fixed independently of the inputs, that both parties can read. Shared randomness is
a free, input-independent correlation, and it is essential for fairly bounding quantum advantage claims, since a quantum advantage should not originate from the mere presence of randomness but from our ability to exploit quantum-mechanical laws, such as quantum
correlations, to outperform classical methods. The classical \textsc{GapIP} protocol of \Cref{thm:margin} uses randomness directly: both workers draw the same random projection matrix $R$ from shared randomness and compute the sketches $Rg_j$ and $Rg_k$, which would not agree if each worker sampled $R$ privately.

The quantum counterpart of shared randomness, in our setting, is \emph{pre-shared entanglement}, a supply of EPR pairs distributed before any input arrives (the multi-party analogue, used by the aggregation protocols of \Cref{sec:protocol} and \Cref{sec:security},
is a shared GHZ state). Throughout \Cref{sec:tieaudit} we use \emph{symmetric resources}: classical protocols may use shared randomness and quantum protocols may use pre-shared
entanglement, both input-independent. This is what makes the comparison fair. A gap obtained by giving the quantum protocol a pre-shared correlation while denying the classical protocol any would be uninformative, because it could merely reflect that one side has a shared resource and the other does not. We instead equip each side with the strongest input-independent correlation of its own kind, shared randomness classically and entanglement quantumly, so that any remaining separation is attributable to quantum mechanics itself rather than to an asymmetry in free resources.

\subsection{Qubits are not free}
A quantum protocol sends qubits in place of bits. One might expect a qubit to carry unboundedly more than a bit, but it cannot: by Holevo's bound a single qubit conveys at most one classical bit of accessible information, and at most two even with unlimited
pre-shared entanglement, the latter saturated by superdense coding~\cite{bw92,bsst}. This ceiling is precisely the factor of two of \Cref{sec:protocol}. The advantages of \Cref{sec:tieaudit} therefore cannot come from packing more data onto each carrier. They
come from a different mechanism: the quantum message lets the receiver extract, through its own choice of measurement, an answer that no equally short classical message could have encoded in advance. This is also the mechanism behind the broader claim of the paper. The quantum communication layer introduced for the all-reduce in \Cref{sec:protocol} does more than complete aggregation at lower online cost: the same layer, by letting receivers choose what to extract, enables capabilities with no classical counterpart at any
communication cost, namely the information-theoretic privacy of \Cref{sec:security} and the conflict-detection separations of the present section.

\subsection{Separations, with examples}
A \emph{separation} compares the quantum cost $Q(f)$ with the classical cost $R(f)$ as functions of the input size. Two regimes appear in this paper.

A \emph{quadratic} separation has $Q(f)=\widetilde\Theta(\sqrt{R(f)})$. The canonical
example is Gap-Hamming Distance, in which the two parties decide whether their $n$-bit
strings are close or far in Hamming distance: classically this requires $\Theta(n)$
bits~\cite{chakrabartiregev}, whereas $\widetilde O(\sqrt{n})$ qubits suffice, a quadratic
gap. \textsc{GapIP} inherits exactly this separation through the reduction in
\Cref{thm:margin}.

An \emph{exponential} separation has the classical cost polynomially large while the
quantum cost is only polylogarithmic. The canonical example is Boolean Hidden Matching
(\Cref{app:bhm}), where any classical one-way protocol needs $\Omega(\sqrt{P})$ bits but
$O(\log P)$ qubits suffice. \textsc{TieAudit} inherits this separation. To make the gap
concrete, a model with $P=10^9$ parameters forces on the order of $3\times10^4$ classical
bits per client, while the quantum message is a handful of states on
$\lceil\log_2 P\rceil\approx 30$ qubits each.

These two examples also display the role of interaction noted in \Cref{sec:tieaudit}: the
quadratic \textsc{GapIP} advantage needs an interactive protocol, whereas the exponential
\textsc{TieAudit} advantage already holds for a single one-way message.

\begin{note}[On terminology]
We speak of \emph{separations} rather than \emph{advantages} deliberately. A separation is a proven gap between the quantum and classical communication complexity of a fixed task,
resting on a lower bound for the classical side, and it is a worst-case, asymptotic statement about communication alone. The word ``advantage'' is looser and is often read as an end-to-end practical benefit. Keeping to ``separation'' makes precise what we claim and what we do not: the quantum protocols provably move less information, but, as \Cref{rem:tieaudit} stresses, this does not by itself establish a faster or better training procedure. What it does establish is a reduction in communication, the metric our ring all-reduce protocol is built to optimise; which in itself is a plausible route to more efficient training.
\end{note}

\section{Boolean Hidden Matching primer}\label{app:bhm}

Boolean Hidden Matching is the problem underlying the exponential separation of \Cref{thm:tie}. It was introduced by Gavinsky et al.~\cite{gavinsky2008exponential}, with an alternative construction by Bar-Yossef, Jayram, and Kerenidis~\cite{baryossef2004exponential},
as the first example of an exponential gap between quantum and classical one-way communication. We state it here in an accessible manner to explain both why it is hard classically and why a small quantum message solves it, building on the one-way model of \Cref{app:cc}.

\subsection{The problem}
Alice holds a bit string $x\in\{0,1\}^P$. Bob holds a \emph{perfect matching} $M$ on the
index set $[P]=\{1,\dots,P\}$, that is, a set of $P/2$ disjoint pairs $(i,j)$ that together
cover every index once, together with a target vector $w\in\{0,1\}^{P/2}$, one bit per
pair. The promise is that the \emph{edge-parity vector}
\[
  \bigl(x_i\oplus x_j\bigr)_{(i,j)\in M}
\]
equals either $w$ or its bitwise complement $\bar w$. From a single message sent by Alice,
Bob must decide which. The only thing that matters about $x$ is, for each of Bob's pairs,
the parity $x_i\oplus x_j$; but Alice does not know $M$, so she must commit her message
before learning which pairs Bob will read.

For a concrete instance, take $P=4$ and $x=(0,0,1,1)$. The matching $M=\{(1,2),(3,4)\}$
yields parities $(x_1\oplus x_2,\,x_3\oplus x_4)=(0,0)$, while the matching
$M'=\{(1,3),(2,4)\}$ on the same $x$ yields $(x_1\oplus x_3,\,x_2\oplus x_4)=(1,1)$. The
parities Bob extracts depend on the matching he holds, and Alice cannot tailor her message
to a matching she has not seen.

\subsection{Why it is hard classically}
Alice's message is a function of $x$ alone, fixed before $M$ is revealed. To answer
correctly for whatever matching Bob holds, it would have to preserve the parity
$x_i\oplus x_j$ of an arbitrary disjoint pairing. There are about $P^2/2$ candidate pairs
but only $P/2$ in any one matching, chosen adversarially after the message is sent, so any
short summary of $x$, a random sketch or a subset of its bits, will miss the pairs Bob
happens to ask about. Making this precise, Gavinsky et al.~\cite{gavinsky2008exponential}
show that any classical one-way protocol with constant advantage must send $\Omega(\sqrt P)$
bits. The threshold sits at $\sqrt P$ rather than $P$ because a message of that size already
starts to cover pairs by a birthday-type coincidence; below it, almost every matching is
effectively unseen.

\subsection{Why a small quantum message suffices}
Alice instead sends the \emph{phase state}
\[
  \ket{x} \;=\; \frac{1}{\sqrt P}\sum_{c=1}^{P} (-1)^{x_c}\,\ket{c},
\]
a single register of $\lceil\log_2 P\rceil$ qubits that records all $P$ bits at once in its
signs. Bob, holding $M$, measures in the basis adapted to his matching, namely
$\bigl\{(\ket i+\ket j)/\sqrt2,\ (\ket i-\ket j)/\sqrt2\bigr\}_{(i,j)\in M}$. Restricted to
one pair $(i,j)$ the state is $\bigl((-1)^{x_i}\ket i+(-1)^{x_j}\ket j\bigr)/\sqrt P$, and
its overlap with $(\ket i-\ket j)/\sqrt2$ is
\[
  \frac{(-1)^{x_i}-(-1)^{x_j}}{\sqrt{2P}},
\]
which vanishes exactly when $x_i=x_j$. The measurement therefore returns a uniformly random
pair $(i,j)\in M$ together with its parity, read off with no error: outcome
$(\ket i+\ket j)/\sqrt2$ means $x_i\oplus x_j=0$, and $(\ket i-\ket j)/\sqrt2$ means
$x_i\oplus x_j=1$.

A single measurement decides the problem. If the edge-parity vector is $w$, the
sampled parity agrees with the corresponding bit of $w$; if it is $\bar w$, it disagrees.
Comparing one extracted parity against $w$ identifies the branch with constant advantage,
using one message of $O(\log P)$ qubits, an exponential saving over the classical
$\Omega(\sqrt P)$.

\subsection{From Boolean Hidden Matching to \textsc{TieAudit}}
\Cref{thm:tie} is the estimation form of this problem. Setting $x_c=\mathbf 1[s_c=+1]$ turns
the sign-gradient into a bit string, so that $x_i\oplus x_j=1$ precisely when $s_i\ne s_j$,
that is, when the tied pair $(i,j)$ is sign-inconsistent. The phase state coincides with the
\textsc{TieAudit} state: $(-1)^{x_c}=-s_c$ gives $\ket x=-\ket s$, the same physical state up
to a global sign, and the client's matching-basis measurement is exactly the extraction
above. The one difference is the output. Boolean Hidden Matching asks a single yes/no
question, $w$ versus $\bar w$, which one copy answers, whereas \textsc{TieAudit} asks for the
\emph{fraction} $f_k$ of inconsistent pairs to additive accuracy $\pm\epsilon$. Estimating a
fraction rather than deciding a binary promise is why \textsc{TieAudit} averages over
$O(\epsilon^{-2})$ copies, for $O(\epsilon^{-2}\log P)$ qubits in total, while the
$\Omega(\sqrt P)$ classical bound carries over because an additive-$\tfrac14$ estimate already
decides the underlying branch.

\section{$\textsc{GapIP}_\tau$ Quantum protocol} \label{app:gapip}

\subsection{Encoding}  
Worker $j$ augments its unit vector to
$\hat u=(1,g_j)/\sqrt2\in\mathbb R^{P+1}$ and locally prepares the amplitude-encoded
state $\ket{\hat u}=U_j\ket0$ on $n=\lceil\log_2(P+1)\rceil=O(\log P)$ qubits, where
$U_j$ depends only on $g_j$; worker $k$ likewise prepares $\ket{\hat v}=U_k\ket0$ from
$\hat v=(1,g_k)/\sqrt2$.  The augmentation does two things: $\langle\hat u,\hat v\rangle
=\tfrac12\bigl(1+\langle g_j,g_k\rangle\bigr)\in[0,1]$ is nonnegative, hence a
\emph{monotone} function of the signed inner product (overlap estimation natively
returns the unsigned angle, and the offset restores the sign), and it sends the
decision threshold $\langle g_j,g_k\rangle=\pm\tau$ to $\langle\hat u,\hat v\rangle
=\tfrac12(1\pm\tau)$, in the interior of $[0,1]$ where $\arccos$ is well-conditioned.
Only communication is counted; the local cost of $U_j$ (up to $O(P)$ gates) is not
charged.

\subsection{Local reflections}  
Each reflection is a local operation on the shared
register: $R_{\hat u}=2\ket{\hat u}\bra{\hat u}-I=U_j R_0 U_j^\dagger$ with
$R_0=2\ket0\bra0-I$ a multi-controlled phase, applied by worker $j$ alone; symmetrically
$R_{\hat v}=U_k R_0 U_k^\dagger$ by worker $k$.

\subsection{Distributed Grover operator}  
The two-party operator $G=R_{\hat u}R_{\hat v}$ is
realised on a single travelling $n$-qubit register: worker $k$ applies $R_{\hat v}$,
transmits the register to worker $j$, who applies $R_{\hat u}$ and returns it.  One
application of $G$ (or of its controlled version, carrying an $O(1)$-qubit
phase-estimation ancilla along with the register) therefore costs one round-trip of
$O(\log P)$ qubits.  Within the plane $\mathrm{span}\{\hat u,\hat v\}$, $G$ is a rotation
by $2\theta$ with $\theta=\arccos\langle\hat u,\hat v\rangle$; it acts trivially on the
orthogonal complement, and the initial state $\ket{\hat v}$ lies in the plane.

\subsection{Estimation and decision}  
Iterative (Kitaev) phase estimation~\cite{kitaev2002}
on $G$, run on $\ket{\hat v}$, estimates the eigenphase $2\theta$ (hence
$\langle\hat u,\hat v\rangle=\cos\theta$, hence $\langle g_j,g_k\rangle$) to additive
$\pm\Theta(\tau)$ using $O(1/\tau)$ applications of $G$, the Heisenberg rate
($\cos\theta$ is injective on $\theta\in[0,\pi]$, so the eigenphase sign is immaterial).
Each application is one round-trip, giving $O(\tau^{-1}\log P)$ qubits; boosting the
success probability to $\ge2/3$ multiplies this by $O(\log(1/\delta))$, for
$\widetilde O(\tau^{-1}\log P)$ total.  The protocol is manifestly interactive
($O(1/\tau)$ sequential round-trips), consistent with \Cref{rem:margin}: no one-round
protocol attains the Heisenberg rate.

\end{document}